\definecolor{mygreen}{RGB}{10,150,110}
\definecolor{myred}{RGB}{150,10,20}
\providecommand{\email}[1]{\href{mailto:#1}{\nolinkurl{#1}\xspace}}
\crefname{lemma}{Lemma}{Lemmas}
\crefname{theorem}{Theorem}{Theorems}
\crefname{property}{Property}{Properties}
\crefname{claim}{Claim}{Claims}
\crefname{result}{Result}{Results}
\crefname{definition}{Definition}{Definitions}
\crefname{observation}{Observation}{Observations}
\crefname{proposition}{Proposition}{Propositions}
\crefname{assumption}{Assumption}{Assumptions}
\crefname{line}{Line}{Lines}
\crefname{figure}{Figure}{Figures}
\crefname{equation}{}{}
\crefname{section}{Section}{Sections}
\crefname{appendix}{Appendix}{Appendices}
\crefname{algCounter}{Algorithm}{Algorithms}
\Crefname{algCounter}{Algorithm}{Algorithms}
\newtheorem{lemma}{Lemma}[section]
\newtheorem{theorem}[lemma]{Theorem}
\newtheorem{proposition}[lemma]{Proposition}
\newtheorem{definition}[lemma]{Definition}
\newtheorem{fact}[lemma]{Fact}
\newtheorem{remark}[lemma]{Remark}
\newtheorem*{remark*}{Remark}
\newcommand{\opt}{\textnormal{OPT}}
\newcommand{\sol}{\textnormal{SOL}}
\newcommand{\poly}{\operatorname{poly}}
\newcommand{\argmin}{\textnormal{argmin}}
\newcommand{\RGMIS}{\ensuremath{\textup{RGMIS}}}
\newcommand{\AlgAddMul}{\ensuremath{\textup{AlgAddMul}}}
\newcommand{\AlgMul}{\ensuremath{\textup{AlgMul}}}
\newcommand{\Active}{\ensuremath{\textup{Active}}}
\newcommand{\CurrentMISSize}{\ensuremath{\textup{CurrentMISSize}}}
\newcommand{\cA}{\mathcal{A}}
\newcommand{\cC}{\mathcal{C}}
\newcommand{\cO}{\mathcal{O}}
\newcommand{\tO}{\widetilde{O}}
\newcommand{\E}{\mathbb{E}}
\newcommand{\Prob}{\mathbb{P}}
\newcommand{\eps}{\varepsilon}
\newcommand{\SF}[0]{\ensuremath{\textsf{SF}}}
\newcommand{\TMIS}{\ensuremath{T_\mathrm{MIS}}}
\newcommand{\cmis}{\ensuremath{c_\mathrm{MIS}}}
\def\target{\mbox{Target}}
\title{Sublinear Metric Steiner Forest via Maximal Independent Set}
\author{Sepideh Mahabadi\thanks{Microsoft Research; \email{smahabadi@microsoft.com}.} \and Mohammad Roghani\thanks{Stanford University. E-mail: \email{roghani@stanford.edu}. The work was done while the author was an intern at Microsoft Research.} \and Jakub Tarnawski\thanks{Microsoft Research; \email{jakub.tarnawski@microsoft.com}.} \and Ali Vakilian\thanks{Virginia Tech; \email{vakilian@vt.edu}.}}
\date{}
\begin{document}
\begin{titlepage}
\maketitle
\thispagestyle{empty}
\begin{abstract}
    In this work we consider the Metric Steiner Forest problem in the sublinear time model.
    Given a set $V$ of $n$ points in a metric space where distances are provided by means of query access to an $n\times n$ distance matrix, along with a set of $k$ terminal pairs $(s_1,t_1), \dots, (s_k,t_k)\in V\times V$, the goal is to find a minimum-weight subset of  edges that connects each terminal pair.
    Although sublinear time algorithms have been studied for estimating the weight of a minimum spanning tree in both general and metric settings, as well as for the metric Steiner Tree problem, no sublinear time algorithm was known for the metric Steiner Forest problem.
    
    Here, we give an $O(\log k)$-approximation algorithm for the problem that runs in time $\widetilde{O}(n^{3/2})$.
    Along the way, we provide the first sublinear-time algorithm for estimating the size of a Maximal Independent Set (MIS). Our algorithm runs in time $\widetilde{O}(n^{3/2}/\varepsilon^2)$
    under the adjacency matrix oracle model
    and obtains a purely multiplicative $(1+\varepsilon)$-approximation.
    Previously, sublinear-time algorithms for MIS were only known for bounded-degree graphs.
\end{abstract}
\end{titlepage}

\setcounter{page}{1}
\section{Introduction}
{\em Steiner Forest} is a classic network design problem where, given a weighted undirected graph $G = (V, E, w)$ where the weights are specified
by $w:E\rightarrow \mathbb{R}_{\geq 0}$, along with 
a set of $k$ source-sink terminal pairs $T=\{(s_1,t_1),\dots,(s_k, t_k)\}$, 
the goal is to find a subgraph $G'$ of minimum total weight such that each pair $(s_i, t_i)$ is connected in $G'$. 
This problem generalizes the {\em Steiner Tree} problem (in fact it is also known as {\em generalized Steiner tree problem}), and hence is APX-hard. In particular, approximating it to a factor better than 96/95 is NP-hard \cite{chlebik2008steiner}.

The first approximation algorithm for the problem was a primal-dual based approach given by Agrawal, Klein, and Ravi \cite{agrawal1995trees} that achieved a $2$-approximation. Later, Goemans and Williamson \cite{goemans1995general} provided a simplified simulation of their primal-dual algorithm, which gives a $(2 - 2
/n)$-approximate solution, where $n$ is the number of vertices.
Improving the approximation guarantee of 2 has remained an open problem until very recently, when~\cite{ahmadi2025breaking} broke this barrier by designing a $2-2^{-11}$-approximation.
Furthermore, Gupta and Kumar \cite{gupta2015greedy} provided a simple greedy-based algorithm achieving a constant larger than 2. Designing algorithms with specific features and addressing the problem within various models such as online, parallel, and streaming remains an active area to day  (see e.g.~\cite{gross2018local, czumaj2024streaming, jin2024streaming, chekuri2025streaming, bamas2022improved, ghalami2022parallel}).

There is a large body of work on designing sublinear time algorithms for graph problems such as minimum spanning tree (MST)~\cite{chazelle2005approximating}, maximal independent set (MIS)~\cite{YoshidaYISTOC09}, matching~\cite{YoshidaYISTOC09,  kapralov2020space, Behnezhad21,BehnezhadRR23a, BehnezhadRRS-SODA23, BhattacharyaKS23, abrrfocs25, MahabadiRT25, BehnezhadRR24, BehnezhadRR23b, BhattacharyaKS-STOC23}, spanners~\cite{lenzen2018centralized,parter2019local,arviv2023improved}, metric MST~\cite{czumaj2009estimating}, metric Steiner Tree~\cite{chen2023query, mahabadi2025sublinear}, and metric TSP~\cite{chen2020, ChenMetric-Arxiv22, TSP-icalp24}, among others. Given that a sublinear time algorithm cannot afford to read the entire graph, it is instead provided an \emph{oracle access} to the input graph. 
There are two main oracle models for graph problems considered in the literature.
In the {\em adjacency list oracle}, the algorithm can query $(v,i)$, where $v\in V$ and $i\leq n$, and the oracle reports the $i$-th neighbor of the vertex $v$ in its adjacency list (along with the weight of the edge in case of weighted graphs), or NULL if $i$ is larger than the number of $v$'s neighbors. In the {\em adjacency matrix oracle}, the algorithm can query $(u,v)$, where $u,v\in V$, and the oracle reports whether there exists an edge between $u$ and $v$ (along with its weight in the case of weighted graphs).

The focus of this work is on designing a {\em sublinear time} algorithm for the metric Steiner Forest problem under the adjacency matrix model. 
We remark that for ``non-metric'' instances, even approximating the MST cost within any  factor requires $\Omega(n^2)$ time. Consider two vertex sets $S$ and $T$, each of size $n/2$, where all pairwise distances between vertices within $S$ or within $T$ are zero, and all distances between $S$ and $T$ are one, except for a single random pair $(s^*, t^*)$ across $S$ and $T$, whose distance is zero with probability $1/2$. Any algorithm that approximates the MST cost with high probability must distinguish between a total cost of zero and one. This requires identifying the pair $(s^*, t^*)$, which in turn requires $\Omega(n^2)$ queries.

\begin{definition}[Sublinear Metric Steiner Forest]\label{def:steiner-forest}
In the metric Steiner Forest problem, we are given a set of points $V$, a set of $k$ terminal pairs $T=\{(s_1,t_1),\dots,(s_k,t_k)\}\subseteq V\times V$, and query access to the $|V|\times |V|$ distance matrix of a metric space $(V, w)$, where an oracle query for $(u,v)$ returns the weight $w(u,v)$ of the edge $(u,v)$.

Let $\SF(V, T, w)$ denote the minimum weight of a Steiner Forest on instance $(V, T, w)$.    
Then, the goal is to design an algorithm that estimates $\SF(V, T, w)$ using $o(n^2)$ queries to the distance matrix via the oracle.
\end{definition}

\paragraph{First contribution.} In this work, we give the first sublinear time algorithm for approximating $\SF(V,T,w)$. In Section \ref{sec:steiner-forest} we show the following result.

\begin{theorem}\label{thm:steiner-forest-main}
There exists an algorithm for estimating the cost of Steiner Forest within a multiplicative factor of $O(\log k)$ using $O(\TMIS \cdot \log k) = \tO(n^{3/2})$ queries to the distance matrix oracle. Here, $\TMIS$ denotes the best runtime of a sublinear algorithm for finding a multiplicative $O(1)$-approximation for the size of any MIS under the adjacency matrix model.
\end{theorem}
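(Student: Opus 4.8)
The plan is to reduce estimating $\SF(V,T,w)$ to $O(\log k)$ calls to the MIS-size estimator, one per relevant dyadic distance scale. For a threshold $r>0$ let $G_r$ be the graph on $V$ with $u\sim v$ iff $w(u,v)\le r$, so that one query to the distance oracle answers one adjacency query of $G_r$; and let $R_r\subseteq V$ be the set of terminals belonging to some pair $(s_j,t_j)$ with $w(s_j,t_j)>r$. For a dyadic scale $2^i$, let $M_i$ denote the maximal independent set of $G_{2^i}$ induced on $R_{2^i}$ produced by the greedy rule in a fixed vertex order --- i.e.\ exactly the quantity a call to the MIS-size estimator approximates. With $D:=\max_j w(s_j,t_j)$, I will argue that $\widehat E:=\sum_{i\in\mathcal I}2^i|M_i|$ is an $O(\log k)$-approximation of $\SF(V,T,w)$, where $\mathcal I$ is the set of dyadic scales $i$ with $2^i\in[\Theta(D/k^2),\Theta(D)]$, so that $|\mathcal I|=O(\log k)$. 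The algorithm reads $D$ and the sets $R_{2^i}$ off the given pairs, runs the MIS-size estimator on $G_{2^i}$ induced on $R_{2^i}$ for each $i\in\mathcal I$, and outputs $\sum_{i\in\mathcal I}2^i$ times the returned estimates.

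Two facts fix the scale range. For $2^i\ge D$ both endpoints of every pair are adjacent in $G_{2^i}$, so $R_{2^i}=\emptyset$ and $M_i=\emptyset$; and since $\SF\ge D$ (a path joining the farthest pair has weight $\ge D$) while $\sum_{2^i<D/k^2}2^i|R_{2^i}|\le 2k\cdot O(D/k^2)=O(D/k)=O(\SF/k)$, the scales outside $\mathcal I$ contribute a negligible $O(\SF/k)$, as do direct edges added at the end for the pairs of length below $D/k^2$. The easy half is the upper bound $\widehat E\le O(\log k)\cdot\SF$: fix any feasible forest; for each $i$ the points of $M_i$ are pairwise at distance $>2^i$ and each has a partner at distance $>2^i$, so the balls $\{B(v,2^i/4):v\in M_i\}$ are pairwise disjoint, and inside each of them the forest carries a sub-walk of weight $2^i/4$ starting at $v$ along its path to that far partner; these sub-walks are edge-disjoint (even fractionally), so $2^i|M_i|=O(\SF)$, and summing over the $O(\log k)$ scales of $\mathcal I$ proves the bound.

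The main obstacle is the matching bound $\widehat E=\Omega(\SF)$, i.e.\ exhibiting a feasible Steiner forest of cost $O(\widehat E)$. I would build it by a hierarchical, single-linkage-style clustering that sweeps the scales of $\mathcal I$ from fine to coarse, maintaining a partition of the terminals into clusters: at each new scale $2^i$ one merges clusters that are still ``active'' (contain a terminal whose partner, at distance $>2^i$, is not yet connected to it) and that are mutually close (within $\Theta(2^i)$), paying one edge of weight $\Theta(2^i)$ per merge. The crux is to show that (i) the number of merges at scale $2^i$ is $O(|M_i|)$, so that the total cost telescopes to $O(\sum_i 2^i|M_i|)=O(\widehat E)$ rather than the trivial $O(kD)$, and (ii) every pair of distance $\ell$ is connected by the time the sweep passes scale $\Theta(\ell)\le\Theta(D)$. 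Both hinge on the same metric fact: diameters grow along chains of $\le 2^i$-hops, so the active cluster representatives at scale $2^i$ must be $\Omega(2^i)$-separated --- hence comparable in number to the greedy maximal independent set $M_i$, and nearby demand endpoints fall into a common cluster after $O(1)$ further scales. Reconciling the greedy \emph{maximal} independent sets $M_i$ that define $\widehat E$ with the cluster representatives produced by the sweep --- a greedy maximal independent set can be much smaller than a maximum one --- is the most delicate point. Granting the construction, $\SF\le O(\widehat E)$, hence $\widehat E=\Omega(\SF)$, and with the upper bound this makes $\widehat E$ an $O(\log k)$-approximation of $\SF$.

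For the sublinear implementation, run the MIS-size estimator on $G_{2^i}$ induced on $R_{2^i}$ for each of the $O(\log k)$ scales $i\in\mathcal I$, answering each of its adjacency queries with one distance-oracle query; each call uses $\TMIS$ queries and returns $|M_i|$ up to a constant factor. Reading $D$ and the sets $R_{2^i}$ off the given pairs uses $O(k)$ further queries (which one may take to be $\widetilde O(n^{3/2})$, or avoid by a logarithmic search for the relevant scales). Outputting $\sum_{i\in\mathcal I}2^i$ times the returned estimates, the per-scale multiplicative errors propagate to a multiplicative error on $\widehat E$ that is absorbed into the $O(\log k)$ factor, and the total query count is $O(\TMIS\cdot\log k)=\widetilde O(n^{3/2})$, as claimed.
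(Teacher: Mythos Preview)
Your framework matches the paper's: dyadic scales over $[\Theta(D/k^2),\Theta(D)]$, an ``active terminal'' set at each scale, an MIS on the corresponding threshold graph, and output the weighted sum. Your ball-packing argument for $2^i|M_i|=O(\SF)$ is exactly the paper's \cref{lem:lower-bound-opt}, and summing over $O(\log k)$ scales gives the easy direction.

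The gap is precisely where you flag it. Your sketch for $\SF\le O(\widehat E)$ builds a single-linkage clustering independently of the $M_i$, then claims the number of merges at scale $2^i$ is $O(|M_i|)$ because ``active cluster representatives at scale $2^i$ must be $\Omega(2^i)$-separated --- hence comparable in number to the greedy maximal independent set $M_i$''. This inference fails: an $\Omega(2^i)$-separated set is merely an independent set in $G_{\Theta(2^i)}$, and an independent set can be arbitrarily larger than a maximal one. Concretely, take a metric star with center $c$ and $n$ leaves, each at distance $1$ from $c$ and $2$ from each other; at threshold $1.5$ the greedy MIS that picks $c$ first has size $1$, while the leaves form an independent set of size $n$. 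So you cannot bound your number of representatives (or merges) by $|M_i|$ via separation alone, and ``granting the construction'' is hiding a real obstruction, not a routine detail.

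The paper's resolution is to avoid the comparison altogether: do not cluster independently. Instead, at level $i$ take the MIS $U_i$ itself as the set of cluster centers and assign every active terminal to its nearest center (maximality guarantees each is within $2\tau_i$). Then there are \emph{exactly} $|M_i|$ clusters at level $i$. Connecting the level-$(i{+}1)$ clusters internally is done by a spanning-forest argument on the super-node graph whose nodes are the $|M_i|$ level-$i$ clusters, using $<|M_i|$ edges of weight $O(\tau_i)$ each (\cref{lem:connect-clusters}); connecting each demand pair at the unique level where $\tau_i\le w(s_j,t_j)<\tau_{i+1}$ is done by a second spanning-forest argument on the same $|M_i|$ super-nodes (\cref{lem:target-connect}). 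Both contributions are $O(M_i\tau_i)$, and summing yields $\SF\le 6\sum_i M_i\tau_i$ with no cross-clustering comparison required. Your sublinear implementation paragraph is correct and matches the paper.
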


We assume that the number $k$ of terminal pairs is $O(n)$. 
Note that for any given Steiner Forest instance, there exists a list of at most $n-1$ terminal pairs $(s_i,t_i)$ that fully characterize it.
Without this assumption, an $O(k)$ term will be added to the runtime of \cref{thm:steiner-forest-main} to account for reading the pairs and computing this succinct representation. Moreover, this dependence on $k$ in the runtime is required. To see this, consider the case where there is a single vertex $v$ that is far away from the rest of the graph, and there only exists a single pair involving $v$. Unless this pair is detected, the reported solution will be off by an arbitrarily large factor. 

\paragraph{Maximal Independent Set (MIS).}
As a key component of our approach,
we develop a sublinear algorithm for the 
maximal independent set (MIS) problem
in the adjacency matrix model.

An independent set is a set of vertices such that no two of its elements are connected by an edge.
While the \emph{maximum} independent set problem is NP-hard to even approximate within a factor $n^{1-\eps}$ \cite{hastad1996clique,zuckerman2006linear},
for our purposes we only require
an independent set
that is \emph{maximal}.

MIS is a basic problem in its own right.
It is also broadly used as a building block
and studied
in computational models with limited time, memory or communication,
such as distributed and parallel \cite{luby1985simple,alon1986fast,BlellochFS12,ghaffari2016improved,FischerA19,GhaffariH21},
Massively Parallel Computing \cite{LattanziMSV11,BrandtFU18,Onak18,ghaffari2018improved,GhaffariU19,GhaffariGJ20},
Local Computation Algorithms \cite{NguyenOnakFOCS08,YoshidaYISTOC09,RubinfeldTVX11,alon2012space,ReingoldV16,LeviRY17,ghaffari2022local},
dynamic~\cite{BehnezhadDHSS19,AssadiOSS19,ChechikZ19},
as well as in algorithms for other fundamental problems,
such as matching (a maximal matching is an MIS of the corresponding line graph) and
vertex cover~\cite{YoshidaYISTOC09,OnakSODA12,Behnezhad21}, or correlation clustering~\cite{AilonCN05,BehnezhadCMT22,DalirrooyfardMS24}.

Since an MIS cannot be explicitly found in sublinear time\footnote{Consider, for example, an instance with only one edge connecting two random vertices. To find any MIS, one must locate that edge, which requires $\Omega(n^2)$ time in the adjacency matrix model.},
the task becomes that of estimating the MIS size,
or developing oracles that efficiently check whether a vertex is in some fixed MIS.
The study of such algorithms was initiated by Nguyen and Onak~\cite{NguyenOnakFOCS08}
in the context of bounded-degree graphs.
They considered the \textbf{random greedy maximal independent set (RGMIS)} process,
which iterates over vertices according to a random permutation $\pi$
and adds a vertex to the set $\RGMIS(\pi)$ if none of its predecessors were already added.
A natural oracle to check whether a given vertex $v$ is in $\RGMIS(\pi)$
is to ask, for all its neighbors $u$ with lower rank in $\pi$, whether $u$ is in $\RGMIS(\pi)$,
and return yes only if none of them are.

Nguyen and Onak~\cite{NguyenOnakFOCS08}
showed that
for a random permutation $\pi$,
the expected query complexity of this oracle is $2^{O(\Delta)}$,
where $\Delta$ is the maximum degree.
(They then used this to obtain a $(2,\eps n)$-approximation\footnote{We use this notation to say that the algorithm has a multiplicative approximation ratio of $2$, with an additional additive error of up to $\eps n$.} for maximum matching in time $2^{O(\Delta)}/\eps^2$.)
They conjectured that if the neighbors $u$ are queried in the order of increasing rank in $\pi$ (see \cref{alg:abstract-oracle}),
the query complexity bound can be improved.

In a seminal result, Yoshida, Yamamoto and Ito~\cite{YoshidaYISTOC09}
showed that if the query vertex $v$ is also random,
this improved oracle indeed has an expected query complexity of only $O(\Delta)$
(see \cref{thm:yyi} for the precise statement).
By identifying the low-rank neighbors of a vertex
in time $O(\Delta)$ in the adjacency list model
or in time $O(n)$ in the adjacency matrix model,
this implies an algorithm for the vertex oracle
that runs in time
$O(\Delta^2)$
in the adjacency list model
or in time
$O(\Delta n)$
in the adjacency matrix model.
We note that these are not sublinear-time if $\Delta = \Omega(n)$, and
they only return the answer for a single vertex query. To the best of our knowledge,
no faster algorithms for MIS itself have been developed.
However, the result of Yoshida, Yamamoto and Ito~\cite{YoshidaYISTOC09} has been highly influential for other basic problems, particularly maximum matching.
By studying line graphs, they obtained a $(2,\eps n)$-approximation algorithm that runs in time $\poly(\Delta)/\eps^2$.
This was further improved by Onak, Ron, Rosen and Rubinfeld~\cite{OnakSODA12}
and Behnezhad~\cite{Behnezhad21}.

\paragraph{Second contribution.} As the second main result of our paper, we give a sublinear time algorithm to estimate the RGMIS size in the adjacency matrix model.
We note that we obtain a purely multiplicative approximation, without an additive error; this is indeed required for our approximation guarantee for the Steiner Forest problem.

\begin{restatable}{theorem}{mismainthm} \label{thm:mis-main}
    For any $\eps \in (0,1)$
    there is an algorithm (\cref{alg:mainmis}) that,
    given a graph $(V,M)$
    with oracle access to its adjacency matrix,
    %
    %
    with high probability
    reports a multiplicative $(1+\eps)$-approximation to the value $|\RGMIS(\pi)|$
    for some permutation $\pi$ of $V$
    and
    runs in $\tO(n^{3/2}/\eps^2)$ time.
\end{restatable}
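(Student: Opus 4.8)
The plan is to fix, once and for all, i.i.d.\ uniform ranks $r_v \in [0,1]$ for $v\in V$, which induce a uniformly random permutation $\pi$; the algorithm will output a value that, with high probability, is a $(1\pm\eps)$-approximation of $|\RGMIS(\pi)|$ for this particular $\pi$ (this is what certifies ``for some permutation $\pi$'' in the statement). It branches according to a threshold of $\sqrt n$ on the unknown quantity $|\RGMIS(\pi)|$.

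First I would simulate the greedy process \emph{directly}: sort $V$ by rank (no queries) and scan vertices in increasing rank order, maintaining the partial independent set $S$; for each vertex $w$, query the $|S|$ pairs $\{(w,u):u\in S\}$ and add $w$ to $S$ iff none is an edge, \emph{aborting} the instant $|S|$ reaches $\lceil\sqrt n\rceil$. Since at most $n$ vertices are processed at a cost of at most $\lceil\sqrt n\rceil$ queries each, this uses $O(n^{3/2})$ queries (plus $O(n\log n)$ time). If it finishes without aborting it has computed $\RGMIS(\pi)$ exactly and we output $|S|$. If it aborts, then $S\subseteq\RGMIS(\pi)$ already witnesses $p:=|\RGMIS(\pi)|/n\ge n^{-1/2}$, and I would switch to sampling: draw $s=\Theta(\eps^{-2}\sqrt n\log n)$ i.i.d.\ uniform vertices $v_1,\dots,v_s$, decide for each whether $v_i\in\RGMIS(\pi)$ (w.r.t.\ the same ranks), and output $\frac ns\,|\{i:v_i\in\RGMIS(\pi)\}|$. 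Correctness is a multiplicative Chernoff bound: conditioned on the ranks, the indicators $\mathbf 1[v_i\in\RGMIS(\pi)]$ are i.i.d.\ Bernoulli$(p)$ with $p\ge n^{-1/2}$, so $s\ge\Theta(\eps^{-2}p^{-1}\log n)$ samples make the empirical mean a $(1\pm\eps)$-approximation of $p$, hence the output a $(1\pm\eps)$-approximation of $|\RGMIS(\pi)|=pn$, with high probability.

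To decide whether $v\in\RGMIS(\pi)$ I would use the Yoshida--Yamamoto--Ito recursive oracle (\cref{alg:abstract-oracle}), which recurses on the neighbors of $v$ of smaller rank in increasing rank order, returns ``out'' as soon as one returns ``in'' and ``in'' otherwise, with memoization; the needed primitive ``list the lower-rank neighbors of $w$ in increasing rank order'' is implemented in the adjacency-matrix model by scanning all vertices in increasing rank order and querying adjacency to $w$, stopping once an ``in'' neighbor is found (or rank $r_w$ is reached). The crux, and the step I expect to be the main obstacle, is showing that for a uniformly random query vertex and random ranks this makes only $\tO(n)$ adjacency queries in expectation; with $s=\tO(\sqrt n/\eps^2)$ this yields $\tO(n^{3/2}/\eps^2)$ overall. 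The query count equals $\sum_w(\text{scan cost at }w)$ over the nodes $w$ of the (memoized) recursion tree, and the scan cost at $w$ is at most $r_w\cdot n$ since we never scan past rank $r_w$; so I would combine the Yoshida--Yamamoto--Ito bound on the expected recursion-tree size (\cref{thm:yyi}) with a bound $\E[\sum_w r_w]=\tO(1)$. For the latter: the root alone contributes $\E[r_v]\,n=O(n)$, while along any root-to-node path the ranks drop fast — each recursive step passes to one of the \emph{lowest}-rank neighbors of the current vertex, and a vertex whose rank falls below roughly the reciprocal of its degree is likely to have no lower-rank neighbor and end the branch — so the total rank mass in the tree is, in expectation, dominated by a geometric series.

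The $\tO(n^{3/2}/\eps^2)$ bound above is in expectation; to make it hold with high probability I would run $O(\log n)$ independent copies of the sampling branch, abort any copy whose query count exceeds a fixed $\tO(n^{3/2}/\eps^2)$ budget (by Markov's inequality each copy aborts with probability bounded away from $1$), and return the median of the estimates of the surviving copies. A union bound over copies and over the $n^{-\Omega(1)}$ failure probabilities of the Chernoff step then completes the proof.
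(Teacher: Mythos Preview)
Your overall architecture---explicitly build the first $\sqrt n$ elements of $\RGMIS(\pi)$, and if this does not exhaust the MIS, estimate $|\RGMIS(\pi)|/n$ by sampling $\tO(\sqrt n/\eps^2)$ vertices and testing membership with the recursive oracle---is exactly the paper's \cref{alg:mainmis}, and the Chernoff and median-of-$O(\log n)$-copies steps are also as in the paper. The gap is in the key oracle-runtime bound: your claim $\E\bigl[\sum_{w}r_w\bigr]=\tO(1)$ (sum of ranks over the memoized recursion tree) is false. On $K_{n/2,n/2}$ with parts $A,B$, condition on $v\in A$ and $\pi(1)\in A$ (probability~$\approx\tfrac14$); then $\RGMIS(\pi)=A$, and the call $\cO(\pi,v)$ recurses on \emph{every} $B$-vertex of rank below $\mathrm{rank}(v)$ (each of which scans rank~$1$, hits $\pi(1)$ in the cache, and returns ``out''). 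The memoized tree is $\{v,\pi(1)\}\cup\{b\in B:\mathrm{rank}(b)<\mathrm{rank}(v)\}$, giving $\sum_w r_w \approx r_v + \tfrac14 r_v^2\, n$, whose expectation is $\Theta(n)$. The geometric-series heuristic fails because the tree \emph{branches}: the $\Theta(r_v n)$ many $B$-children of $v$ have ranks spread over $[0,r_v]$, not geometrically decreasing. The slack is precisely that for a vertex $w$ returning ``out'', the actual scan cost is the rank of its first MIS neighbor (here, $1$), not $r_w n$.

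The paper proves $\E_{v,\pi}[T_\cO]=O(n)$ (\cref{lem:O-runtime}) without going through $\sum_w r_w$. It couples $\cO$ on $G$ with the abstract oracle $\cA$ on a larger graph $H$ obtained by adjoining a layer of $2n$ vertices adjacent to all of $V$ and to a further layer of $n$ vertices; via Bertrand's ballot problem, with constant probability the random permutation on $H$ is ``good'', in which case each adjacency-matrix probe of $\cO$ can be charged to a distinct recursive call of $\cA$ into the middle layer, so $T_\cO\le O(T_\cA^H)$ and \cref{thm:yyi} on $H$ (which has $\Theta(n)$ average degree regardless of $G$) finishes. A simpler $\tO(n)$ route, closer in spirit to yours but still not via $\sum_w r_w$, is sketched in \cref{remark:simpler-proof}: with caching, bound the scan cost at each $u$ by $O((n/\deg u)\log n)$ per neighbor enumerated (Chernoff on a random $\pi$), then combine with the per-vertex form of the YYI bound.
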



\subsection{Related Work}
\paragraph{Sublinear MST.}
The problem of estimating the weight of the minimum spanning tree (MST) of a given graph in sublinear time was first studied by Chazelle, Rubinfeld, and Trevisan \cite{chazelle2005approximating}, who considered the adjacency list model. They gave an algorithm with a running time of $\tO(\bar{d}\cdot W \cdot \eps^{-2})$ providing a $(1+\eps)$-approximation for graphs with weights in $[1,W]$ and average degree $\bar{d}$. They further showed a lower bound of $\Omega(\bar{d} \cdot W\cdot \eps^{-2})$ on the time complexity of $(1+\eps)$-approximation algorithms.
  
\paragraph{Sublinear Metric MST.}
Czumaj and Sohler \cite{czumaj2009estimating} considered the metric version of the problem, where the vertices lie in a metric space and the weight of each edge corresponds to the distance between two points in this space. These distances can be queried in $O(1)$ time. They provided a $(1+\eps)$-approximation algorithm with running time $\tO(n/\eps^{7})$. Since the full description of an $n$-point metric space is of size $\Theta(n^2)$, this algorithm is sublinear with respect to the input size. However, outputting an actual spanning tree whose weight is within any constant factor of the minimum spanning tree was shown to require $\Omega(n^2)$ time \cite{indyk1999sublinear}.

By allowing more sophisticated orthogonal range queries and cone approximate nearest neighbor queries, \cite{czumaj2003sublinear} designed a $(1+\varepsilon)$-approximation algorithm for estimating the minimum weight of a spanning tree in constant-dimensional Euclidean space, where the input is supported by a minimal bounding cube enclosing it. The algorithm runs in time $\tO(\sqrt{n} \cdot \poly(1/\eps))$.


\paragraph{Sublinear Metric Steiner Tree.}
For metric Steiner Tree, in which only a subset of vertices of size $k$, i.e., terminals, need to be connected, the algorithm of \cite{czumaj2009estimating} gives a $(2+\eps)$ approximation in time $O(k/\eps^7)$. This uses the well-known result by Gilbert and Pollak \cite{gilbert1968steiner} showing that an $\alpha$-approximation for MST over the metric induced on the terminals is a $(2\alpha)$-approximation for the metric Steiner Tree problem. Later, Chen, Khanna, and Tan \cite{chen2023query} studied the problem with strictly better-than-2 approximation.
On the lower bound side, they showed that for any $\eps > 0$, estimating the Steiner Tree cost to within a $(5/3 -\eps)$-factor requires $\Omega(n^2)$ queries, even when the number of terminals is constant. Moreover, they showed that for any $\eps > 0$, estimating the Steiner Tree cost to within a $(2 - \eps)$-factor requires $\Omega(n + k^{6/5})$ queries.
Additionally, they proved that for any $0 < \eps < 1/3$, any algorithm that outputs a $(2-\eps)$-approximate Steiner Tree (not just its cost) requires $\Omega(nk)$ queries. 
On the upper bound side, they showed that an algorithm that, with high probability, computes a $(2-\eta)$-approximation of
the Steiner Tree cost using $\tO(n^{13/7})$ queries, where $\eta > 0$ is a universal constant. Recently, \cite{mahabadi2025sublinear} gave an improved algorithm with query complexity of $\tO(n^{5/3})$ achieving such a $(2-\eta)$-approximation.
We emphasize that the algorithms of \cite{chen2023query} and \cite{mahabadi2025sublinear} and the challenges they face are inherently different from ours.
They start with the MST solution on the set of terminals, which already gives a simple factor-$2$ approximation, and then decide whether it is possible to improve over it by adding Steiner (non-terminal) vertices (which is done by reducing to an instance of the set cover problem). However, for the Steiner Forest problem such a simple $2$-approximate solution is not known and thus a different set of techniques needs to be developed.

\subsection{Overview of our Algorithms}
A natural framework one would first try for solving this problem is the approach used in \cite{chazelle2005approximating, czumaj2009estimating} for estimating the weight of an MST; as we describe below, it fundamentally does not suffice for the Steiner Forest problem.
Consider the subgraph consisting of all edges of weight up to some threshold $\tau$, and let $c$ be the number of connected components in this graph. Then any MST needs to use {\em exactly} $c-1$ edges of weight larger than $\tau$. This intuition was formalized in these works, giving the equality $\mbox{MST}\approx n-W+\eps\sum_{i=0}^n (1+\eps)^i c_i$, where $c_i$ is the number of connected components of the graph where we set the threshold to $\tau_i=(1+\eps)^i$. Then these prior works focus on estimating $c_i$ and providing a rigorous proof that the approximation they can get for $c_i$ is sufficient for the ultimate task of estimating MST.

\medskip

There is an equivalent intuition for the Steiner Forest problem when we consider the threshold graph parameterized by $\tau$, with two differences.
\begin{itemize}
    \item First, $c$ should now be the number of {\em active} connected components; a component is active if for some pair $(s_j,t_j)$ one of the two vertices is inside the component and the other one is outside.
    \item Second, the number of edges of weight more than $\tau$ used in the optimal solution is now at least $c/2$ and at most $c-1$, because not all active components will need to be connected in the end -- we might only need a matching between these active components.
\end{itemize}

Thus, a potential algorithm would be to instead estimate the number of active connected components $c_i$ for each threshold $\tau_i$ and follow the approach of \cite{czumaj2009estimating}. Note that, roughly speaking, the second item above does not cause much trouble, as it only results in a factor-2 blow-up in the approximation.

\paragraph{Challenge in estimating the number of active components.} A common approach for counting the number of connected components in a graph is to sample a random vertex and start growing a ball from that vertex (e.g.~using BFS) until either the entire component is visited, in which case we increment our estimate of the number of components, or the total number of visited vertices reaches a threshold, in which case we can ignore that component and show that the error introduced by ignoring such large components is small. While this is the core idea, the details required to obtain a multiplicative approximation in \cite{czumaj2009estimating} are more involved. In particular, they run a modified BFS procedure where very close vertices are treated as a single vertex, allowing for further improvements.

To mimic this approach, first of all, one has to sample a random terminal as opposed to a random vertex, to ensure that unrelated parts of the graph are not selected most of the time. However, one challenge that does not allow us to follow this approach is that when growing a ball from a terminal (or a vertex in general), the total time spent depends on the {\em volume} of the component that we explore, i.e., the number of vertices we visit. 
For the MST problem, this volume can be related to the cost that an optimal solution must pay within that component, since ultimately all vertices must be connected.

However, this does not hold for the Steiner Forest problem. When we grow a ball around a terminal, the optimal solution might only select a single path from the center to the boundary of the ball to connect the terminal to its pair.
At a high level, the cost of such search algorithms depends on the {\em volume} of the ball, whereas the optimal solution for Steiner Forest might only pay the cost of the ball's {\em radius}.

\paragraph{Our new approach.}
In this work, we adopt a different approach than counting the number of active components, which follows a similar intuition but can be implemented in sublinear time. 
We consider balls of radius $\tau$ around all terminals, and call a ball {\em active} if the matching vertex of the center of the ball falls outside of it.
The advantage is that checking whether a ball is active or not under this new definition requires only a single query. 
Now, if we take a set of $M$ disjoint active balls, then any optimal solution needs to pay a cost of at least $M\cdot \tau$ (to buy a path from the center to the boundary of each ball).

Therefore, instead of counting the number of active components, our algorithm seeks to find a {\em maximal independent set} on the set of active balls for a certain threshold $\tau$. This is where we use the sublinear algorithm for MIS on the graph defined on the set of active balls, and output $\sum_i M_i\cdot\tau_i$ for exponentially increasing thresholds $\tau_i$. We show that this achieves an $O(\log k)$-approximation for Steiner Forest.

To that end,
we note that each terminal in the MIS $M_i$ induces a certain cluster of low diameter $O(\tau_i)$.
We show how to inductively construct a solution that internally connects every such cluster,
for increasing~$i$,
while paying at most $M_i \cdot \tau_i$ for each threshold $\tau_i$.
This hierarchically built partial solution
then serves as a scaffolding to connect the demands.
We assign each demand pair carefully to a specific level~$i$.
Then, for each $i$, we show that one can connect all assigned demands 
within the same budget of $M_i \cdot \tau_i$
by selecting a spanning forest in a suitably defined auxiliary graph.
%

Finally, we also show that this MIS-based approach cannot lead to a better than $O(\log k)$-approximation for Steiner Forest.

\paragraph{Challenges for MIS.} Let us now describe the main challenges that arise when developing a sublinear-time algorithm for MIS, and outline our key ideas for addressing them.

Recall from the introduction above
that a direct implementation of the RGMIS oracle~\cite{YoshidaYISTOC09}
in the adjacency matrix model
would result in an expected runtime of $O(\Delta n)$
for a single vertex query.
The two major issues are that
this is not sublinear-time for dense graphs,
and that it only returns a single bit of information about a single vertex,
whereas the MIS size can range from $1$ to $n$.
We can thus obtain a $(1,\eps n)$  multiplicative-additive approximation
by making $\widetilde{O}(1/\eps^2)$ queries,
but a purely multiplicative approximation seems to require $\Omega(n)$ queries
(to e.g. differentiate between different constant MIS sizes).

Even in the adjacency list model, the query complexity analysis of \cite{YoshidaYISTOC09} does not explicitly translate into an algorithm. The main challenge lies in generating the neighbors of a vertex in increasing order of their ranks in the random permutation $\pi$. Achieving this efficiently seems difficult if the goal is to spend only $o(n)$ time per step to obtain the next neighbor in the ordered list. Moreover, approaches that rely on this query complexity analysis to design algorithms for maximum matching \cite{Behnezhad21} cannot be applied here. In the case of matching, each edge has only two endpoints, and revealing its rank requires ``notifying" just these two endpoints. However, for MIS, we would need to notify up to $\Omega(n)$ neighbors in the worst case.

\paragraph{Our MIS approach.}
As a warm-up let us first describe a simpler approach that leads to an algorithm with $\tO(n^{5/3})$ runtime.
Our first insight is that,
rather than using an oracle as above,
an RGMIS can be also built directly bottom-up by iterating over vertices in the permutation $\pi$;
when a vertex is visited, we add it to the RGMIS and disqualify all its neighbors from further consideration.
We can perform $s$ steps of this process in time $O(ns)$,
and it will result in explicitly building the entire RGMIS if its size is at most $s$.
Let us run it for $s=\tO(n^{2/3})$ steps.
One can then prove that with high probability, the maximum degree in the remaining graph is at most $n^{1/3}$;
thus the vertex oracle~\cite{YoshidaYISTOC09} runs in time $O(n^{4/3})$.
Then, $\tO(n^{1/3}/\eps^2)$ queries to this oracle will result in an $(1+\eps, O(\eps n^{2/3}))$ multiplicative-additive error, which gives a purely multiplicative approximation for the entire graph because the RGMIS size is now known to be at least $n^{2/3}$.

In order to improve upon the above and obtain our final running time of $\tO(n^{3/2})$,
we show a different implementation of the RGMIS vertex oracle in the adjacency matrix model
(\cref{alg:O}),
and
we prove that it has an expected running time of $O(n)$ instead of $O(\Delta n)$,
even in dense graphs.
We do this via a probabilistic argument in which
we couple the execution of our \cref{alg:O}
on the original input graph
to the execution of the Yoshida, Yamamoto and Ito oracle (\cref{alg:abstract-oracle})
on a certain larger graph,
and relate the number of adjacency matrix oracle calls of the former
to the number of recursive calls of the latter.
This is possible as long as the random permutation on this larger graph
has a favorable structure;
we show that this happens with at least a constant probability
via a connection to a combinatorial ballot voting problem
whose study dates back to the 19th century~\cite{whitworth,bertrand1887solution}.

\section{Sublinear Metric Steiner Forest}\label{sec:steiner-forest}
In this section, we will prove \cref{thm:steiner-forest-main}. We assume that we have access to an algorithm for MIS that gives a multiplicative $O(1)$-approximation in the adjacency matrix model. \cref{thm:mis-main}, shown in \cref{sec:MIS}, provides such an algorithm.

\paragraph{Notation.}
We will use the term {\em terminal} to refer to any of the vertices in $\bigcup_{i\leq k} \{s_i, t_i\}$, and will abuse notation and use $T$ to also refer to all terminals when clear from the context. We use {\em terminal pairs} to refer specifically to the pairs $(s_i,t_i)$. For any terminal in $T$, we define its {\em match} as the other vertex in the pair, i.e., $m(s_i)=t_i$ and $m(t_i)=s_i$.
Moreover, for ease of notation, we assume that terminal pairs are disjoint. Note that this is without loss of generality by creating copies of original vertices, where each pair $(s_i, t_i)$ uses distinct copies of the corresponding two vertices.

Finally, we use $\opt$ to denote the value of the optimal Steiner Forest connecting all given terminal pairs.

\paragraph{Preprocessing.}
First, we apply a standard preprocessing step to bound the approximation factor in terms of $\log k$ rather than $\log W$. Let $X = \max_{i\leq k} w(s_i,t_i)$, which can be computed using $O(k)$ queries. It is straightforward to show that $X \leq \opt \leq k\cdot X$. Now, we will ignore all terminal pairs $(s_i,t_i)$ such that $w(s_i,t_i)\leq X/k$. At the end of the algorithm we can connect these pairs directly; since there are at most $k$ such pairs, their additional cost will be at most $X$. Thus, they contribute only an additive term of $1$ to the approximation factor. Also note that we do not need to consider using edges whose weights are larger than $kX$. By scaling, from now on we will assume that the minimum terminal pair has distance $2$, and thus the maximum weight of an edge that we would want to ever pick in our solution is at most $2k^2$.

\paragraph{Threshold graph.} Our algorithm will consider several thresholds $\tau_i = 2^i$ for $i$ from $0$ to $L=\log (2k^2) = O(\log k)$. We will use the {\em threshold graph} $G_i=(V,E_i)$ to refer to an unweighted subgraph of $G$ where $(u,v)\in E_i$ if $w(u,v)\leq \tau_i$.

\paragraph{Active terminals and ball graphs.} We will say that a terminal $s\in T$ is {\em active} in level $i\leq L$ if $w(s,m(s))\geq \tau_i$. Now we define an unweighted {\em ball graph} $H_i=(V_i^H,E_i^H)$, where $V_i^H\subseteq T$ is the set of active terminals in level $i$. More precisely, we are implicitly thinking of $V_i^H$ as the set of balls of radius $\tau_i$ around each active terminal. Now two vertices $u,v\in V_i^H$ have an edge between them if their corresponding balls collide, i.e., $(u,v)\in E_i^H$ iff $w(u,v)<2\tau_i$. 

\paragraph{Algorithm.}
Now our algorithm is as follows. For each $0\leq i\leq L$, let $M_i$ be the size of {\em any} MIS in $H_i$, and let $\hat{M}_i$ be a multiplicative estimate of it
given by the algorithm of \cref{thm:mis-main},
i.e., $M_i \leq \hat{M}_i \leq \cmis\cdot M_i$
(we can set $\cmis$ to be e.g.~$1.01$).
Our algorithm will output $\sol=\sum_{i=0}^L \hat{M}_i\cdot \tau_i$.


\begin{lemma}\label{lem:lower-bound-opt}
    For any $i\leq L$, we have $\opt \geq M_i\cdot \tau_i$.
\end{lemma}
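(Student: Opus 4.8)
The plan is to exhibit an explicit lower bound on $\opt$ coming from a maximal independent set in the ball graph $H_i$. Fix a level $i \le L$ and let $S \subseteq V_i^H$ be any MIS in $H_i$ of size $M_i$. By definition of $V_i^H$, every $s \in S$ is an active terminal in level $i$, meaning $w(s, m(s)) \ge \tau_i$; and by the independence of $S$ in $H_i$, any two distinct $s, s' \in S$ satisfy $w(s, s') \ge 2\tau_i$, i.e.\ the open balls $B(s, \tau_i)$ of radius $\tau_i$ centered at the terminals in $S$ are pairwise disjoint.

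First I would set up the charging argument. Let $F^\star$ be an optimal Steiner Forest of weight $\opt$. For each $s \in S$, since $(s, m(s))$ is a demand pair, the forest $F^\star$ contains a path $P_s$ from $s$ to $m(s)$. Because $m(s)$ lies outside the ball $B(s,\tau_i)$ (as $w(s,m(s)) \ge \tau_i$), the path $P_s$ must cross the boundary of this ball: walking along $P_s$ from $s$, consider the first edge $(u,v)$ with $u \in B(s,\tau_i)$ and $v \notin B(s,\tau_i)$. I will charge to $s$ the portion of $P_s$ lying strictly inside $B(s,\tau_i)$ together with this crossing edge; by the triangle inequality (or just summing edge weights along a path from $s$ to $v$, whose metric distance from $s$ is at least $\tau_i$) the total weight of this charged portion is at least $\tau_i$. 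The key observation is that these charged portions are edge-disjoint across different $s \in S$: the vertices $u$ visited before leaving $B(s,\tau_i)$ all lie in $B(s,\tau_i)$, and these balls are pairwise disjoint, so no edge can be charged to two different centers (an edge charged to $s$ has at least one endpoint in $B(s,\tau_i)$, and if it were also charged to $s'$ it would have an endpoint in $B(s',\tau_i)$ — but the only edge with an endpoint outside $B(s,\tau_i)$ that is charged to $s$ is the crossing edge $(u,v)$ with $u \in B(s,\tau_i)$, so both its charged-to balls would have to contain an endpoint, which is impossible for the disjoint balls unless $v \in B(s',\tau_i)$ and $u' \in B(s,\tau_i)\cap B(s',\tau_i)$ — careful bookkeeping here handles it).

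Summing, $\opt = w(F^\star) \ge \sum_{s \in S} (\text{weight charged to } s) \ge \sum_{s \in S} \tau_i = M_i \cdot \tau_i$, as desired. Since $M_i$ was the size of an arbitrary MIS and the statement quantifies over ``any'' MIS, this completes the argument.

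I expect the main obstacle to be making the edge-disjointness of the charged portions fully rigorous. The subtlety is the crossing edge $(u,v)$: its endpoint $v$ lies outside $B(s,\tau_i)$ and could in principle lie inside $B(s',\tau_i)$ for another center $s' \in S$, so one must argue carefully that it cannot then also be the crossing edge charged to $s'$ — this follows because for $s'$ the crossing edge has its \emph{inside} endpoint in $B(s',\tau_i)$ and its outside endpoint elsewhere, and chasing which of $u,v$ plays which role, combined with $B(s,\tau_i) \cap B(s',\tau_i) = \emptyset$, rules out the overlap. An alternative, cleaner route that sidesteps this entirely is to lower-bound the charge to $s$ using only edges with \emph{both} endpoints in the closed half-ball, or to use the standard cut-based argument: contract each ball and observe that $F^\star$ restricted to $B(s,\tau_i)$ must connect $s$ to the ball boundary, a connectivity requirement that costs at least $\tau_i$ by the metric, and these requirements live on disjoint vertex sets hence disjoint edge sets.
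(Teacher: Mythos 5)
Your high-level strategy is the same as the paper's: fix an MIS $U_i$ in $H_i$, use the fact that the open balls $B(u,\tau_i)$ for $u \in U_i$ are pairwise disjoint, and charge to each $u$ the portion of the $u$-to-$m(u)$ path that lies near $B_u$, arguing this portion has weight at least $\tau_i$ and that the charged portions don't overlap. But there is a genuine gap exactly where you flag uncertainty, and the ``careful bookkeeping'' you defer to does not go the way you seem to expect: the crossing edges are \emph{not} disjoint across balls, and disjointness of the balls does not rule this out. Concretely, suppose $s,s' \in U_i$ with $w(s,s')=2\tau_i$, a vertex $a$ with $w(s,a)<\tau_i$, a vertex $b$ with $w(s',b)<\tau_i$, and the tree $F^\star$ contains the edge $(a,b)$, with $P_s$ going $s \to \cdots \to a \to b \to \cdots \to m(s)$ and $P_{s'}$ going $s' \to \cdots \to b \to a \to \cdots \to m(s')$. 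Then $(a,b)$ is simultaneously the crossing edge for $s$ (inside endpoint $a \in B_s$, outside endpoint $b$) and the crossing edge for $s'$ (inside endpoint $b \in B_{s'}$, outside endpoint $a$), and your charging counts $w(a,b)$ twice. Your claim that ``chasing which of $u,v$ plays which role \ldots rules out the overlap'' is therefore false; the overlap can occur. The two alternative fixes you sketch also fail on the same example: if the ball is exited by a single long edge, there need be no edge with both endpoints in the ball, and the restriction of $F^\star$ to the ball can be empty.

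The paper resolves this with a subdivision trick that your write-up is missing: for each $u$, subdivide the crossing edge $(v_1,v_2)$ at a new point $v'$ so that the subpath from $u$ to $v'$ has length \emph{exactly} $\tau_i$, and charge only that truncated subpath. Then for two centers $u,u'$ sharing a crossing edge $(a,b)$ (with $a$ near $u$, $b$ near $u'$), the portion of $(a,b)$ charged to $u$ has length $\tau_i - (\text{length of }P_u\text{ from }u\text{ to }a) \le \tau_i - w(u,a)$, and symmetrically for $u'$; the two portions sum to at most $2\tau_i - w(u,a)-w(u',b) \le w(a,b)$ by the triangle inequality and $w(u,u')\ge 2\tau_i$, so they are genuinely disjoint along the edge. (Alternatively, one could keep your whole-edge charging and argue directly that a doubly-charged pair contributes combined weight at least $2\tau_i$ by the same triangle-inequality computation, but that argument needs to be made explicitly; your write-up asserts disjointness, which is not true.) As written, your proof does not establish the lemma without one of these repairs.
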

\begin{proof}
    Consider an MIS in $H_i$ whose size is $M_i$; let us denote it by $U_i\subseteq V_i^H$. Now take any $u\in U_i$ and consider the ball $B_u$ of radius $\tau_i$ around $u$. Given that $u$ is active in level $i$, it means that in order to reach from $u$ to its match $m(u)$, intuitively, any Steiner Forest solution has to pick a path of length at least $\tau_i$ inside $B_u$.

    More concretely,
    for each $u$ we consider any path $P_u$ connecting $u$ and $m(u)$ in the solution.
    We would like to say that $P_u$ contains a subpath of length at least $\tau_i$ that lies inside $B_u$. Since these balls are disjoint (as $U_i$ is an MIS in $H_i$, the distance between any two ball centers is at least $2\tau_i$), so are the subpaths, which implies that we get $\opt\geq M_i\cdot \tau_i$.

    However, strictly speaking $P_u$ might not contain such a subpath, as we are dealing with a discrete metric space.
    For such $u$,
    for the sake of this analysis only,
    we consider the first edge $(v_1,v_2)$ of $P_u$ that exits $B_u$ (i.e., $w(u,v_1)<\tau_i$ but $w(u,v_2)\geq\tau_i$). 
    We subdivide the edge $(v_1,v_2)$ into two edges $(v_1,v')$ and $(v',v_2)$,
    with the length of $(v_1,v')$ chosen such that
    the length of the subpath of $P_u$ from $u$ to $v'$ becomes exactly $\tau_i$.
    We can now select this subpath for $u$,
    as it lies inside $B_u$
    (and thus will be disjoint from any other subpath).


    
\end{proof}

\begin{lemma}\label{lem:upper-bound-top}
    We have $\opt \leq 6 \cdot \sum_{i=0}^L M_i\cdot \tau_i$.
\end{lemma}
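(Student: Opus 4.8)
The plan is to build an explicit Steiner Forest solution whose total cost is at most $6\sum_i M_i \tau_i$, using the MIS balls as a hierarchical scaffolding, following the informal outline given in the algorithm overview. First I would set up the hierarchy: for each level $i$, fix a concrete MIS $U_i \subseteq V_i^H$ of size $M_i$. Since $U_i$ is \emph{maximal} in $H_i$, every active terminal in level $i$ lies within distance $2\tau_i$ of some center in $U_i$ (otherwise it could be added to $U_i$). This gives, for each level $i$, a clustering of the level-$i$ active terminals into ``clusters'' of diameter $O(\tau_i)$, one per MIS center. I would then define, for each center $u \in U_i$, a low-diameter connected piece $C_{u,i}$ of the solution: intuitively the star (in $G$) from $u$ to all terminals assigned to it, each edge of which has weight $O(\tau_i)$. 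The cost of internally connecting all clusters at level $i$ is then $O(M_i \cdot (\text{cluster size}) \cdot \tau_i)$ — but cluster sizes can be large, so this naive bound is not good enough, and this is where care is needed.

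The key idea to avoid paying per-terminal is to charge \emph{each terminal pair to exactly one level}. For a pair $(s,t)$, let $i^\star$ be the largest $i$ such that both $s$ and $t$ are active in level $i$, equivalently the level with $\tau_{i^\star} \le w(s,t) < \tau_{i^\star+1}$ (up to the factor-$2$ granularity). At level $i^\star$ each of $s,t$ is within $2\tau_{i^\star}$ of an MIS center in $U_{i^\star}$; I assign the pair to those one or two centers. The point is that the number of pairs assigned to level $i^\star$, and hence the number of centers touched and the number of connector edges needed \emph{at that level}, is controlled: I would argue that connecting all demands assigned to level $i$ can be done within budget $O(M_i \tau_i)$ by taking a spanning forest in an auxiliary graph whose vertices are the $M_i$ MIS centers of $U_i$ together with a terminal per assigned demand, where each center-to-assigned-terminal edge has weight $O(\tau_i)$ and — crucially — centers that should ultimately end up connected are merged via edges inherited from the scaffolding built at \emph{lower} levels. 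Because a spanning forest on the $M_i$ centers uses at most $M_i - 1$ edges, and each demand contributes $O(1)$ extra edges of weight $O(\tau_i)$ that get absorbed into the $M_i\tau_i$ term once we observe that distinct demands at level $i$ attach to (a bounded number of) distinct centers, the level-$i$ cost stays $O(M_i \tau_i)$.

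Concretely I would proceed as follows. (1) For $i = L$ down to $0$ (or bottom-up, whichever is cleaner for the induction), maintain a partial forest $F$ on the terminal set. (2) At level $i$: add to $F$, for each center $u \in U_i$, the edges of weight at most $2\tau_i$ connecting $u$ to the terminals assigned to level $i$ and clustered at $u$; this glues each such cluster into one component. (3) Then add a minimum spanning forest, in the metric restricted to the current components that contain level-$i$ active terminals, of total weight $O(M_i \tau_i)$ — using the fact that there are $M_i$ such components (one per center, since maximality of $U_i$ forces every relevant terminal into some center's cluster) and consecutive ones are within $O(\tau_i)$ by the triangle inequality through shared-or-nearby centers. (4) Conclude that every demand pair $(s,t)$ assigned to level $i^\star$ ends up connected in $F$: both endpoints are pulled into their level-$i^\star$ clusters in step (2), and steps (3) across all levels $\ge$ (and $\le$) $i^\star$ ensure those clusters merge appropriately. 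Summing the per-level costs gives $\opt \le O(\sum_i M_i \tau_i)$, and a careful accounting of the constants (the factor $2$ from ball radius vs.\ collision radius $2\tau_i$, the factor $2$ from ``matching suffices, not full connection'', and the factor from the spanning-forest edges) yields the stated constant $6$.

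The main obstacle I expect is step (3) — proving that the per-level ``stitching'' cost is genuinely $O(M_i \tau_i)$ rather than scaling with the number of demands or the number of terminals. The subtlety is that many demand pairs can be assigned to the same level $i$ while all involving distinct, far-apart centers, so one must show that the spanning-forest-on-components bound ($M_i - 1$ edges of weight $O(\tau_i)$) really does dominate, i.e., that adding the per-demand attachment edges does not blow up the count. The clean way is to observe that in the auxiliary graph at level $i$ one only ever needs a \emph{spanning forest}, never a spanning tree, and that each demand pair, once both endpoints are attached to centers in step (2), imposes only a ``connect center $a$ to center $b$'' constraint — and there are at most $M_i - 1$ independent such constraints before the forest becomes a tree. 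Getting this argument tight against the target constant $6$, and correctly handling demand pairs whose endpoints attach to the \emph{same} center (cost $0$ extra) versus two different centers, is the delicate bookkeeping I would budget the most time for.
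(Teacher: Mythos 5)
Your high-level strategy matches the paper's: build a hierarchical scaffolding level by level, assign each demand pair to a single level based on $w(s,m(s))$, and pay $O(M_i\tau_i)$ per level via spanning forests on auxiliary graphs whose nodes are the MIS centers. However, there is a genuine gap in how you realize the scaffolding, and the specific claim you lean on to close it is false.

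The problem is step (2): you propose to connect, at each level $i$, every assigned terminal directly to its center $u \in U_i$ by an edge of weight $\le 2\tau_i$ (a ``star''). You already flagged at the outset that such a star-based bound scales with cluster size and is too expensive, and the proposal never actually escapes that trap. To make the cost bound work you then claim that ``distinct demands at level $i$ attach to (a bounded number of) distinct centers,'' but this is not true: arbitrarily many terminal pairs can be assigned to the same level $i$ and fall into the same cluster $C_i(u)$ (imagine many pairs all within distance roughly $\tau_i$ of a single center). So the number of per-demand attachment edges is not bounded by $M_i$, and the level-$i$ cost is not $O(M_i\tau_i)$ under your construction. This is not a constant-tuning issue; the dependence on the number of demands is wrong.

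What the paper does instead, and what your proposal is missing, is the inductive cluster-connectivity argument (the paper's \cref{lem:connect-clusters}). Rather than star-connecting a level-$(\ell+1)$ cluster to its center, observe that each $C_{\ell+1}(u)$ is a union of pieces, each piece lying entirely inside some level-$\ell$ cluster $C_\ell(u')$, and by the inductive hypothesis each such piece is already internally connected by $F_0\cup\cdots\cup F_\ell$. So to glue $C_{\ell+1}(u)$ together you only need a spanning forest on an auxiliary graph whose supernodes are the at most $M_\ell$ level-$\ell$ clusters, with one edge of weight $\le 2\tau_{\ell+1} = 4\tau_\ell$ per forest edge, giving cost $\le 4 M_\ell \tau_\ell$ at level $\ell+1$; crucially the budget at level $\ell+1$ is $M_\ell\tau_\ell$, not $M_{\ell+1}\tau_{\ell+1}$, and neither depends on the number of terminals in the cluster. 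Once internal cluster connectivity is established for all levels with total cost $4\sum_i M_i\tau_i$, the demand-stitching at level $i$ (the paper's \cref{lem:target-connect}) is a separate spanning forest on the $M_i$ level-$i$ supernodes, costing $\le 2 M_i\tau_i$, for a grand total of $6\sum_i M_i\tau_i$. Your step (3) is morally the second spanning forest, but it cannot run within budget until the first (inductive) construction is in place, and that is exactly the part your step (2) replaces with the too-expensive star.
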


Before we prove \cref{lem:upper-bound-top}, let us first use it to finish the proof of \cref{thm:steiner-forest-main}. 

\begin{proof}[Proof of \cref{thm:steiner-forest-main}]
As for the approximation factor, note that by \cref{lem:upper-bound-top} we have 
\[\opt\leq 6\cdot \sum_i M_i\cdot \tau_i\leq 6\cdot  \sum_i \hat{M}_i\cdot \tau_i = 6\cdot\sol . \]
Also, by \cref{lem:lower-bound-opt} we have 
\[\sol = \sum_i \hat{M}_i \cdot \tau_i \leq \cmis \cdot \sum_i M_i\cdot \tau_i \leq \cmis \cdot (L+1) \cdot (\max_i M_i\cdot \tau_i) \leq  \cmis \cdot (L+1) \cdot \opt \leq O(\log k) \cdot \opt . \]
Thus, our solution is an $O(\log k)$-approximation of $\opt$.

As for the runtime, our algorithm estimates the MIS size for each of the $L+1 = O(\log k)$ threshold values using the algorithm of \cref{thm:mis-main}, and thus has a total runtime of $\tO(n^{3/2})$. Note that the set of vertices $V_i^H$ can be computed using $O(k)$ queries. Further, checking whether two vertices have an edge in $E_i^H$ can be done by one query to $w$, and thus we have an adjacency matrix oracle access for $H_i$. 
\end{proof}

The rest of this section is devoted to the proof of~\cref{lem:upper-bound-top}. We start by introducing further notation.

\paragraph{Clusters and centers.} Let $U_i\subseteq V_i^H$ denote the MIS whose size is $M_i$. For each active vertex $v\in V_i^H$, we define its {\em center} $c_i(v)=\argmin_{u\in U_i} w(v,u)$ to be the closest point in the MIS $U_i$ to $v$, breaking ties arbitrarily.
Note that by maximality of $U_i$ we have $w(v,c_i(v)) < 2 \tau_i$.
Furthermore, for each $u\in U_i$ we define the {\em cluster} of $u$, i.e.,  $C_i(u) =\{v\in V_i^H\colon c_i(v)=u\}$ as the set of active terminals whose center is $u$ at level $i$. Finally, we use $\cC_i=\{C_i(u)\colon u\in U_i\}$ to denote the set of all clusters of level $i$.

First we show that we can connect all clusters internally by paying at most $O(\sum_i M_i\cdot \tau_i)$.

\begin{lemma}[Connectivity within clusters]\label{lem:connect-clusters}
    There exists a subset of edges $F\subseteq V\times V$ of total cost at most $4\cdot \sum_i M_i\cdot \tau_i$ such that any pair of vertices $v_1,v_2$ inside each cluster in $\cC_0\cup\cdots\cup\cC_L$ is connected in the graph $G(V,F)$.
\end{lemma}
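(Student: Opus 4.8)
The plan is to build the edge set $F$ level by level, from $i=0$ up to $i=L$, maintaining the invariant that after processing level $i$, every cluster in $\cC_0 \cup \cdots \cup \cC_i$ is internally connected in $G(V, F)$. At level $i$, the key structural fact is that every cluster $C_i(u)$ has low diameter: any $v \in C_i(u)$ satisfies $w(v, u) < 2\tau_i$ by the tie-breaking choice of centers and maximality of $U_i$, so any two vertices in the cluster are within distance $4\tau_i$ of each other via the center. Thus, to connect $C_i(u)$ internally it would suffice to add a star from $u$ to all members of $C_i(u)$, but that could be far too expensive (a cluster may have many terminals, and we have no budget proportional to $|C_i(u)|$ — only to $M_i = |U_i|$). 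The resolution is to exploit that most terminals in $C_i(u)$ were \emph{already} active (hence present) at level $i-1$ and have already been connected to \emph{something} at the previous level: a terminal $v$ is in $V_i^H$ only if $w(v, m(v)) \geq \tau_i \geq \tau_{i-1}$, so $v \in V_{i-1}^H$ too, and by the inductive invariant the cluster $C_{i-1}(c_{i-1}(v))$ containing $v$ is already internally connected. So within $C_i(u)$ the terminals are already grouped into connected blobs (the level-$(i-1)$ clusters restricted to $C_i(u)$, plus singletons for terminals newly active at level $i$); it remains only to stitch these blobs together, which needs at most (number of blobs $-1$) edges, each of length $O(\tau_i)$.

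The crux of the accounting is therefore to bound $\sum_i (\text{number of level-}(i-1)\text{-blobs lying inside level-}i\text{ clusters})$ by $O(\sum_i M_i)$. Each level-$(i-1)$ cluster $C_{i-1}(u')$ contributes at most one blob to whichever level-$i$ cluster contains its center $u'$ — wait, that is not quite right because different vertices of $C_{i-1}(u')$ might land in different level-$i$ clusters. The right move is instead to charge: consider the partition of $V_{i-1}^H$ into level-$(i-1)$ clusters (there are $M_{i-1}$ of them) and the partition of $V_i^H$ into level-$i$ clusters (there are $M_i$ of them, but $V_i^H \subseteq V_{i-1}^H$). The number of nonempty "pieces" in the common refinement of these two partitions of $V_i^H$ is at most $M_{i-1} + M_i$ by a standard fact about refinements of two partitions into $a$ and $b$ parts (the refinement has at most $a + b - 1$ parts, since each piece after the first must either start a new part of one partition or of the other). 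Newly-active singletons: a terminal newly active at level $i$ was not in $V_{i-1}^H$, so it forms its own piece; but it must still have been active at \emph{some} earlier level or never before — actually if it was never active before, it is a singleton blob, and such terminals across all levels number at most $|T| = O(k)$... this is a potential leak, so I would instead observe that once a terminal becomes active it stays active at all \emph{smaller} thresholds (active at level $i$ means active at all $j \le i$), hence $V_L^H \subseteq V_{L-1}^H \subseteq \cdots \subseteq V_0^H$ is a \emph{nested} chain, and so there are \emph{no} newly-active terminals as $i$ increases — $V_i^H$ only shrinks. Good: that removes the singleton issue entirely, and the blob count at level $i$ is exactly the number of pieces of the level-$(i-1)$ cluster partition restricted to $V_i^H$, which is at most $M_{i-1}$. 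So the number of stitching edges at level $i$ is at most $M_{i-1}$, each of cost $< 4\tau_i = 8\tau_{i-1}$; summing, and adding the cost of level $0$ (where each of the $M_0$ clusters... again needs stitching of at most $|C_0(u)|$-many... no), I need to handle the base case separately.

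The base case $i = 0$ is actually the dangerous one, and I expect it to be the main obstacle together with getting the constant $4$. At level $0$, the threshold is $\tau_0 = 1$ (after scaling, the minimum terminal-pair distance is $2$, so every terminal is active at level $0$, i.e., $V_0^H = T$), and a cluster $C_0(u)$ may contain arbitrarily many terminals with no prior structure to reuse. Connecting it internally by a star costs $|C_0(u)| \cdot O(\tau_0)$, and $\sum_u |C_0(u)| = |T| = 2k$, which is \emph{not} bounded by $O(M_0 \tau_0)$ in general. So I suspect the intended statement is subtler: perhaps the clusters are only required to be connected for the \emph{purpose} of later connecting demands, and the true invariant carried forward is connectivity of the \emph{centers}/representatives rather than all members — or the charging is to $\sum_i M_i \tau_i$ with the level-$0$ contribution absorbed because $\tau_0$ is the smallest scale and $\sum |C_0(u)| \le 2k$ while $M_i \tau_i$ for the top level is $\geq$ (min pair distance) $\geq 2$ per active terminal... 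Let me reconsider: $\sum_i M_i \tau_i \geq \sum_i M_i \geq \sum_{i} |V_i^H|/(\text{max cluster size})$ — not obviously enough. I think the correct reading is that $F$ need only connect each cluster, and the bound $4\sum_i M_i\tau_i$ genuinely requires the nested-refinement telescoping argument above with the \emph{base} handled by noting that at level $0$ one connects each cluster $C_0(u)$ via a spanning tree of cost $< (|C_0(u)|-1)\cdot 2\tau_0$, but then re-amortizes: a cleaner route is to process levels from the \emph{top} ($i=L$) downward, so that at level $i$ one only pays to merge the $\le M_{i+1}$ existing connected level-$(i+1)$ blobs plus the $|V_i^H \setminus V_{i+1}^H|$ terminals that are active at level $i$ but not $i+1$, each newly added at cost $O(\tau_i)$ by attaching to its center — and $\sum_i |V_i^H \setminus V_{i+1}^H| \cdot \tau_i \leq \sum_i (\text{those terminals' pair-distances})$ since such a terminal $v$ has $\tau_i \leq w(v,m(v)) < \tau_{i+1} = 2\tau_i$, hence $\tau_i < w(v,m(v)) \leq \opt$-contributing... this gives $O(\opt)$ not $O(\sum M_i \tau_i)$, and by Lemma~\ref{lem:lower-bound-opt} $\opt \leq M_L \tau_L \leq \sum_i M_i\tau_i$ only gives the wrong direction. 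I will therefore present the nested-chain + partition-refinement telescoping as the main argument, carefully track that the number of new merge-edges at level $i$ is at most $M_i$ (pieces of the coarser level-$(i+1)$ partition inside each level-$i$ cluster, summed, is at most $M_i + M_{i+1}$ but we only count the "extra" ones), bound each such edge by $4\tau_i$, and sum the geometric-in-$i$ but the point is $\sum_i M_i \cdot 4\tau_i \le 4\sum_i M_i\tau_i$ after absorbing a telescoping factor — and I flag that making the final constant exactly $4$ (not $8$ or larger) is where the care about \emph{which} edges are genuinely new, versus reused from the level below, must be exercised.
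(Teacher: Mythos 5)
Your high-level plan (build $F$ level by level, exploit that lower-level clusters are already internally connected, stitch them with few edges of length $O(\tau_i)$) is exactly the paper's strategy, and your observations that $V_L^H \subseteq \cdots \subseteq V_0^H$ is nested and that each cluster has diameter $O(\tau_i)$ are both correct. But the two places where you get stuck are precisely where the paper resolves things differently, and your workarounds do not close the gap.

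First, the base case is \emph{not} dangerous: after the preprocessing, the minimum pairwise distance between terminals is at least $2 = 2\tau_0$, so $H_0$ has no edges. The only MIS of $H_0$ is all of $V_0^H$, and therefore every cluster $C_0(u)$ is a singleton. The base case is trivial (take $F_0 = \emptyset$), there is no ``star of cost $|C_0(u)| \cdot O(\tau_0)$'' to pay for, and the top-down reformulation you consider as a rescue is unnecessary (and, as you yourself note, yields bounds in the wrong direction). Second, your accounting of the inductive step is flawed. You try to bound the number of ``blobs'' — nonempty pieces of $C_{i-1}(u') \cap C_i(w)$ — by $M_{i-1}$ (or $M_{i-1}+M_i-1$ via a ``common refinement'' fact). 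That fact is false for arbitrary set partitions: a single level-$(i-1)$ cluster can intersect several level-$i$ clusters, so the number of blobs can exceed $M_{i-1}$. The paper sidesteps this entirely: it forms an auxiliary graph $G'$ whose vertices are the $M_\ell$ supernodes $U_\ell$ (one per level-$\ell$ cluster), puts an edge $(u_1,u_2)$ whenever some $v \in C_\ell(u_1)$ has its level-$(\ell+1)$ center in $C_\ell(u_2)$, and adds to $F_{\ell+1}$ one realizing edge per edge of a spanning \emph{forest} of $G'$. Since the forest has at most $M_\ell - 1$ edges, and each realizing edge has cost at most $2\tau_{\ell+1} = 4\tau_\ell$ (because $w(v, c_{\ell+1}(v)) < 2\tau_{\ell+1}$ by maximality of $U_{\ell+1}$), the cost of $F_{\ell+1}$ is at most $4 M_\ell \tau_\ell$. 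Crucially, the spanning forest ensures that for any $v$ in a level-$(\ell+1)$ cluster $C_{\ell+1}(w)$, the level-$\ell$ supernode of $v$ and that of $w$ lie in the same component of the forest, so $v$ and $w$ become connected — no per-blob count is ever needed. Summing $4 M_\ell \tau_\ell$ over $\ell = 0, \dots, L-1$ gives the claimed bound $4 \sum_i M_i \tau_i$.
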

\begin{proof}
We let $F_0=\emptyset$.
We will construct sets of edges $F_1,\dots,F_L$ such that
\begin{itemize}
    \item{(Cost Constraint)} for each $i\in \{1,\dots, L\}$, the cost of $F_i$, defined as $\sum_{(x,y)\in F_i} w(x,y)$, is at most $4M_i\tau_i$,
\end{itemize} 
while inductively maintaining the following property:
\begin{itemize}
    \item (Connectivity Property) for  $0\leq i\leq L$, any pair of vertices $v_1,v_2$ both belonging to any cluster in $\cC_0\cup \cdots\cup\cC_i$ is connected in the graph $G(V,F_0\cup\cdots\cup F_i)$.
\end{itemize}

Setting $F=\bigcup_{i=0}^L F_i$, the above cost constraint and connectivity property imply the statement of the lemma.

\medskip
As the base case,
let us ensure the connectivity property for $i=0$. Note that, by the preprocessing stage, we know that the minimum pairwise distance between any two active terminals is at least $2$. Given that $\tau_0=1$, the graph $H_0$ has no edges, and thus the only MIS in this graph contains all vertices in $V_0^H$. Therefore, all clusters in $\cC_0$ have size $1$ and thus the property holds trivially for $i=0$.
\medskip

Next, suppose that the connectivity property holds up to level $i = \ell$. This in particular means that all vertices inside each cluster of $\cC_{\ell}$ are already connected in $G(V,F_0\cup\cdots\cup F_{\ell})$. We will now show how to pick $F_{\ell+1}$ with cost at most $4M_{\ell}\cdot \tau_{\ell}$, satisfying the cost constraint, such that the vertices inside each cluster of $\cC_{\ell+1}$ will become connected in $G(V,F_0\cup\cdots\cup F_{\ell+1})$.
\medskip

Consider an undirected simple graph $G'$ where we merge each cluster of $\cC_{\ell}$ into a supernode. More precisely, for each vertex in $U_{\ell}$, there is a vertex in $G'$.
Now, we connect two supernodes corresponding to $u_1,u_2\in U_{\ell}$ with an edge in $G'$ if there exists a vertex $v\in C_{\ell}(u_1)$ and a vertex $u\in C_{\ell}(u_2)$ such that $c_{\ell+1}(v)=u$;
i.e., on the next level, the center of some point $v$ in one cluster $C_{\ell}(u_1)$ will be a point $u$ in the other cluster $C_{\ell}(u_2)$.

For the connectivity property to hold for level $\ell+1$, we need to make sure that $v$ and $u$ get connected after adding $F_{\ell+1}$.
Given that both clusters $\cC_{\ell}(u_1)$ and $\cC_{\ell}(u_2)$ are already internally connected in $G(V,F_0\cup\cdots\cup F_{\ell})$, 
to connect $u$ and $v$ it will be enough to connect $\cC_{\ell}(u_1)$ to $\cC_{\ell}(u_2)$ in any way.
Intuitively, an edge in $G'$ means that we need to connect the supernodes corresponding to its endpoints (possibly indirectly).

Note that because $u$ belongs to the MIS $U_{\ell+1}$ and serves as the center for $v$, i.e., $c_{\ell+1}(v)=u$,
we have that $u$ and $v$ are connected in $H_{\ell+1}$, and thus their distance is at most $2\tau_{\ell+1}$.
This shows that if $(u_1,u_2)$ is an edge in $G'$,
then the distance between $\cC_{\ell}(u_1)$ and $\cC_{\ell}(u_2)$ is at most $2\tau_{\ell+1}$.


\medskip
Now we state how to construct $F_{\ell+1}$:
We pick a spanning forest in $G'$ and for each edge $(u_1,u_2)$ in the spanning forest, we add the closest pair of vertices in the clusters of the corresponding supernodes, i.e., $C_{\ell}(u_1)$ and $C_{\ell}(u_2)$, to $F_{\ell+1}$.

First, note that if there is an edge between two supernodes in $G'$, then they are in the same connected component of $G'$, and because we pick a spanning forest, the two supernodes will become connected by $F_{\ell+1}$. Thus all connectivity requirements for the clusters in $\cC_{\ell+1}$ will be satisfied and we will have the connectivity property for level $\ell+1$. Second, the total number of edges we pick in the spanning forest is at most $|U_{\ell}|-1<M_{\ell}$, and the cost of each edge added to $F_{\ell+1}$ is at most $2\tau_{\ell+1}\leq 4\tau_{\ell}$. Thus, the total cost of $F_{\ell+1}$ is at most $4M_{\ell}\tau_{\ell}$, satisfying the cost constraint.
\end{proof}

Next, we need to connect each terminal $s\in T$ to its match, $m(s)$, using the available budget $O(\sum_i M_i\cdot \tau_i)$. Again, we define further notation.

\paragraph{Target terminals.} Consider a terminal $s\in T$ such that $s\in V_i^H$ but $s\notin V_{i+1}^H$. The goal is to connect all such terminals $s$ to their matches $m(s)$ using the budget of level $i$, i.e., $O(M_i\cdot \tau_i)$. Thus we define the set of {\em Target} terminals as $\target_i=V_i^H\setminus V_{i+1}^H$. Further, for $u\in U_i$ in the MIS, let $\target_i(u)=\target_i\cap C_i(u)$ be the target set inside the cluster of $u$.

\medskip

\begin{lemma}[Connecting target pairs]\label{lem:target-connect}
    For every $i=0,...,L$,
    there exists a subset of edges $J_i\subseteq V\times V$ of total cost at most $2M_i\cdot \tau_i$ such that all terminals in $\target_i$ are connected to their match in the graph $G(V,F\cup J_i)$. Here $F\subseteq V\times V$ is the subset of edges given by \cref{lem:connect-clusters}. 
\end{lemma}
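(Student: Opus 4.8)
The plan is, at each level $i$, to superimpose a spanning forest on the level-$i$ clusters and to realize each of its edges by an \emph{actual demand edge} $(s,m(s))$ rather than by a direct link between two cluster centers; the intra-cluster connectivity already supplied by $F$ via \cref{lem:connect-clusters} then does the rest. First note that if $s\in\target_i$ then $\tau_i\le w(s,m(s))<2\tau_i$, and since $w(m(s),m(m(s)))=w(s,m(s))$ this also gives $m(s)\in\target_i\subseteq V_i^H$; hence both $s$ and $m(s)$ have well-defined centers $c_i(s),c_i(m(s))\in U_i$. I would build an auxiliary simple graph on vertex set $U_i$ by adding, for every $s\in\target_i$ with $c_i(s)\neq c_i(m(s))$, the edge $\{c_i(s),c_i(m(s))\}$; let $T''_i$ be any spanning forest of it. For each edge $\{u_1,u_2\}\in T''_i$ fix one witness $s\in\target_i$ with $\{c_i(s),c_i(m(s))\}=\{u_1,u_2\}$, and put the single demand edge $(s,m(s))$ into $J_i$.

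The cost bound is immediate: $T''_i$ has at most $|U_i|-1=M_i-1$ edges, and each edge added to $J_i$ has weight $w(s,m(s))<2\tau_i$ because $s\in\target_i\subseteq V_i^H\setminus V_{i+1}^H$; hence $J_i$ has total cost $<2M_i\tau_i$. For connectivity, fix any $s\in\target_i$ and write $u=c_i(s)$, $u'=c_i(m(s))$. By \cref{lem:connect-clusters} the clusters $C_i(u),C_i(u')\in\cC_i$ are each internally connected in $G(V,F)$, and since $s,u\in C_i(u)$ and $m(s),u'\in C_i(u')$ this yields $s\sim u$ and $m(s)\sim u'$ in $G(V,F)$. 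If $u=u'$ we are done; otherwise $\{u,u'\}$ is an edge of the auxiliary graph, so $u$ and $u'$ lie in one component and are joined by a path in $T''_i$. For each edge $\{u_1,u_2\}$ on that path, the associated demand edge in $J_i$, together with the (free, via $F$) internal connectivity of $C_i(u_1)$ and $C_i(u_2)$, gives $u_1\sim u_2$ in $G(V,F\cup J_i)$; concatenating along the path gives $u\sim u'$, and therefore $s\sim m(s)$.

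The one delicate point --- and the reason the obvious ``just connect the two cluster centers'' idea is not good enough --- is the budget: the triangle inequality only bounds $w(u,u')\le w(u,s)+w(s,m(s))+w(m(s),u')<6\tau_i$, so linking centers directly would cost up to $6M_i\tau_i$. Routing instead through the demand edge $(s,m(s))$ itself (length $<2\tau_i$) while letting $F$ absorb all within-cluster hops is exactly what keeps the total at $2M_i\tau_i$; the only thing one must check is that inserting $(s,m(s))$ truly merges the two entire clusters $C_i(u_1),C_i(u_2)$, which \cref{lem:connect-clusters} guarantees since $s\in C_i(u_1)$ and $m(s)\in C_i(u_2)$.
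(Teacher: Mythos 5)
Your proof is correct and follows essentially the same route as the paper: build an auxiliary graph on $U_i$ whose edges are induced by cross-cluster demand pairs in $\target_i$, take a spanning forest, realize each forest edge by a single $\le 2\tau_i$ connector to get cost $\le 2M_i\tau_i$, and rely on \cref{lem:connect-clusters} for the intra-cluster hops. The only (cosmetic) difference is the choice of realizing edge for each forest edge $\{u_1,u_2\}$: you insert the witness demand edge $(s,m(s))$, while the paper inserts the closest pair between $C_i(u_1)$ and $C_i(u_2)$; both are bounded by $w(s,m(s))<2\tau_i$ and make the argument go through identically.
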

\begin{proof}
   
We will again define an undirected simple graph $G'$ where we merge each cluster of $\cC_i$ into a supernode. 
Note that because we have included the set $F$ in $G(V,F\cup J_i)$,  all the nodes within a supernode are already connected.

The vertex set of $G'$ is defined similarly as in the proof of \cref{lem:connect-clusters}, where for each vertex in $U_{i}$, there is a vertex in $G'$. 
However the edges are now defined differently. Two supernodes corresponding to $u_1,u_2\in U_{i}$ are connected in $G'$ if there exists a vertex $s\in \target_i(u_1)$ such that its match is in the other cluster, i.e., $m(s)\in C_{i}(u_2)$.

Our goal is to pick a set of edges $J_i$ such that the pair $(s,m(s))$ gets connected. Given that the clusters are connected internally by \cref{lem:connect-clusters} using $F$, it is enough to ensure that we connect the supernodes corresponding to $u_1$ and $u_2$ (possibly indirectly) whenever there is an edge between them in $G'$.

Further, note that picking an edge in $G'$ can be implemented by adding an edge to $J_i$ with a cost of at most $2\tau_{i}$. This is because $s$ and $m(s)$ are in $\target_i$ and thus are not in $V_{i+1}^H$. This means that their distance is at most $\tau_{i+1}\leq 2\tau_i$. 

\medskip
The rest of the proof follows that of \cref{lem:connect-clusters}.
We state how to construct  $J_{i}$: Again  we pick a spanning forest in $G'$ and for each edge $(u_1,u_2)$ in the spanning forest, we add the closest pair of vertices in the clusters of the corresponding supernodes, i.e., $C_{i}(u_1)$ and $C_{i}(u_2)$, to $J_{i}$.

First, note that if there is an edge between two supernodes in $G'$, then they are in the same connected component of $G'$, and because we pick a spanning forest, the two supernodes will get connected by adding $J_i$. Thus all connectivity requirements for the pairs in $\target_i$ are satisfied. Second, the total number of edges we pick in the spanning forest is at most $|U_i|-1<M_i$, and the cost of each edge is at most $2\tau_i$. Thus the total cost of $J_i$ is bounded by $2M_i\tau_i$.
\end{proof}

\begin{proof}[Proof of \cref{lem:upper-bound-top}]
Note that for each terminal pair $(s_j,t_j)$ there exists an index $i\leq L$ for which $s_j\in \target_i$. Therefore, if we use \cref{lem:target-connect} for all values of $0\leq i\leq L$, then all terminals in $T$ will be connected to their match in the graph $G(V,F\cup J_0\cup \cdots \cup J_L)$.
Thus, $OPT$ is at most the cost of $F\cup J_0\cup \cdots \cup J_L$, which by \cref{lem:connect-clusters} and \cref{lem:target-connect} is at most $6\cdot \sum_i M_i\cdot\tau_i$, as desired.
\end{proof}

\subsection{Tightness of the MIS Approach for Steiner Forest}
Here, we show that the $O(\log k)$-approximation analysis of our MIS-based estimate for Steiner Forest is tight.

\begin{lemma}
    The estimate $\sum_{i} M_i \tau_i$ cannot get better than $O(\log k)$-approximation for the Steiner Forest estimation problem.
\end{lemma}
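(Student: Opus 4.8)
The plan is to exhibit a single Steiner Forest instance, realized by a one-dimensional (line) metric, on which $\sum_i M_i\tau_i$ overshoots $\opt$ by a factor $\Omega(\log k)$ \emph{regardless of which maximal independent sets are used to define the $M_i$}. I would set up the instance as follows. Let $D=\lfloor\log_2 k\rfloor-1=\Theta(\log k)$. For each level $i\in\{1,\dots,D\}$ create a group $\mathcal{G}_i$ of roughly $k/2^{i+1}$ terminal pairs, where the $\ell$-th pair of $\mathcal{G}_i$ is placed at the two line positions $2\ell\cdot 2^i+i\delta$ and $(2\ell+1)\cdot 2^i+i\delta$ (so the pair has distance exactly $2^i$), for $\ell=0,1,\dots$, and $\delta=\Theta(1/k)$ is a tiny per-group offset making all points distinct; the metric $w$ is the distance on the line. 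All points lie in $[0,k+1]$, the total number of pairs is $\sum_i k/2^{i+1}\le k/2$, and the instance is a genuine finite metric space (so the offsets never flip any of the strict inequalities below, which involve differences of powers of two, i.e.\ differences that are $\ge 1$ in magnitude before the $O(D\delta)=o(1)$ offsets).

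Second, I would pin down the two sides of the ratio. For $\opt$: connecting consecutive points along the line uses edges of total length $\le k+1$ and places everything in one component, so $\opt=O(k)$; conversely $\mathcal{G}_D$ contains a pair at distance $2^D=\Theta(k)$, and connecting a pair $(a,b)$ in a line metric costs at least $|a-b|$ (triangle inequality along the connecting path), so $\opt=\Omega(k)$, hence $\opt=\Theta(k)$. For the ball graphs: at level $i$ the active terminals are exactly the endpoints of the groups $\mathcal{G}_{i'}$ with $i'\ge i$ (the pairs of distance $\ge 2^i$), and the edge rule in $H_i$ is ``distance $<2\tau_i=2^{i+1}$''. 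Restricted to the endpoints of $\mathcal{G}_i$ alone --- which occupy the $\Theta(k/2^i)$ positions $\{m\cdot 2^i+i\delta : 0\le m\lesssim k/2^i\}$ --- two such points are adjacent in $H_i$ iff they are consecutive (distance $2^i<2^{i+1}$, while distance $2^{i+1}\not<2^{i+1}$); thus $H_i$ contains an induced path $P_i$ on $\Theta(k/2^i)$ vertices.

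Third --- the heart of the argument --- I would prove: \emph{every} maximal independent set $S$ of $H_i$ satisfies $|S|=\Omega(k/2^i)$, whence $M_i\cdot\tau_i=\Omega(k)=\Omega(\opt)$ for each of the $D=\Theta(\log k)$ levels, giving $\sum_i M_i\tau_i=\Omega(\log k)\cdot\opt$, so the estimate is $\Omega(\log k)$-far from $\opt$. To prove the claim, split $S$ into $S_P:=S\cap V(P_i)$ and $Q:=S\setminus V(P_i)$. Then $S_P$ is independent in the path $P_i$, so the set of $P_i$-vertices it dominates in $P_i$ (itself plus path-neighbors) has size $\le 3|S_P|$. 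Every remaining (orphan) vertex of $P_i$ must, by maximality of $S$, be $H_i$-adjacent to some vertex of $S$; it cannot be adjacent to a vertex of $S_P$ (adjacency among $P_i$-vertices is exactly path-adjacency), so it is adjacent to a vertex of $Q$. But each $Q$-vertex comes from a group $\mathcal{G}_{i'}$ with $i'>i$ and so is $H_i$-adjacent only to points within distance $<2^{i+1}$ of it, an open interval of length $2^{i+2}$ containing at most $4$ of the evenly $2^i$-spaced vertices of $P_i$; hence $Q$ accounts for at most $4|Q|$ orphans. Combining, $|V(P_i)|\le 3|S_P|+4|Q|\le 4|S|$, i.e.\ $|S|=\Omega(|V(P_i)|)=\Omega(k/2^i)$. (Since $\hat M_i\ge M_i$, the algorithm's reported value $\sum_i\hat M_i\tau_i$ is $\Omega(\log k)$-far as well.)

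The main obstacle I anticipate is exactly this last step: ruling out that the many extra active terminals from the higher groups $\mathcal{G}_{i'}$ ($i'>i$) present in $H_i$ could be used to dominate the path $P_i$ ``for free''. The geometric fact that at scale $2^i$ each such ball can cover only $O(1)$ vertices of $P_i$ is elementary, but it is what forces the specific choices in the construction --- a fixed geometric spacing $2^i$ at each level and the negligible per-level offsets $\delta$ --- and it is where I would be most careful in the full write-up.
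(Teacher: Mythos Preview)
Your argument is correct: the multi-scale line instance works, and your domination-counting argument (each $S_P$-vertex covers $\le 3$ path vertices, each $Q$-vertex covers $\le 4$) cleanly establishes $M_i=\Omega(k/2^i)$ for \emph{every} maximal independent set, yielding $\sum_i M_i\tau_i=\Omega(\opt\log k)$.

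However, your route is more elaborate than the paper's. The paper's first instance $I_1$ is simpler: $n$ points on a line with spacing~$2$, and all $k=n/2$ pairs have the \emph{same} distance~$n$ (namely $(v_i,v_{i+n/2})$). Then at every level~$j$ all terminals are active, and $H_j$ is just the $(2^j{-}1)$-st power of a path on $n$ vertices; every MIS of such a graph has size $\Theta(n/2^j)$ by a direct greedy argument, no multi-group interaction to analyze. This gives the same $\Omega(\log k)$ gap with less work. Your construction, by contrast, interleaves groups at all scales on one line and then must argue that higher-scale terminals cannot dominate the level-$i$ path cheaply --- correct, but avoidable.

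The paper also supplies a second instance $I_2$ (the groups at different scales placed in far-apart clusters) on which $\sum_i M_i\tau_i=\Theta(\opt)$. This is used to rule out the possibility that some \emph{fixed rescaling} $c\cdot\sum_i M_i\tau_i$ achieves $o(\log k)$-approximation: if the estimate were always $\Theta(\opt\log k)$, dividing by $\log k$ would fix it. Your single instance proves the lemma as literally stated (the unscaled estimate is $\Omega(\log k)$ away from $\opt$), but does not address this stronger ``no-scaling-helps'' claim that the paper's proof establishes.
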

\begin{proof}
We provide two instances $I_1, I_2$, such that on $I_1$, the value of the estimate is of $O(\opt(I_1))$ and on $I_2$, the value of estimate is of $O(\opt(I_2) \cdot \log k)$.
Since we require the estimate to always be greater than the optimal Steiner Forest cost, our MIS-based estimate, possibly after a fixed scaling, cannot guarantee better than $O(\log k)$-approximation.
In both instances, there are $n = 2^L -1$ vertices $v_1, \dots, v_n$ and the number of terminal pairs $k$ is $O(n)$.

\paragraph{Construction of $I_1$.}
In this instance, these $n$ vertices are located on a line (i.e., one-dimensional Euclidean space) where the distance of any consecutive pair of vertices is exactly $2$. Formally, for every $i,j\le n$, $d(v_i, v_j) = 2(i-j)$. Finally, the terminal pairs are $\{(v_i, v_{i+n/2})\}_{i\le n/2}$. 

\begin{figure}[!h]
    \centering    \includegraphics[width=0.8\linewidth]{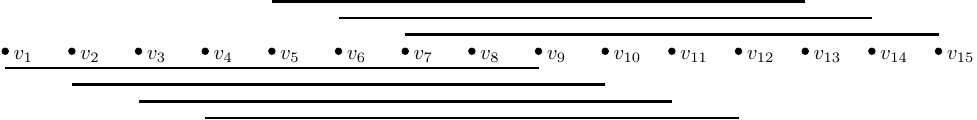}
    \caption{An example of $I_1$ with $15$ vertices. The endpoints of line segments denote terminal pairs.}
    \label{fig:I1}
\end{figure}

For any level $j< L$ in the MIS based algorithm, all pairs are active and it is straightforward to check that the size of every MIS is $O(n/ 2^j)$. Then, the estimate will have cost at least $\sum_{j < L} 2^j \cdot O(n/2^j) = O(n L) = O(n \cdot \log k)$. However, the optimal Steiner Forest on this instance has cost $O(n)$.

\paragraph{Construction of $I_2$.} In this instance, the vertices are partitioned into $L$ disjoint clusters $\mathcal{V}_1, \dots, \mathcal{V}_L$ such that for every $i\le L$, the cluster $\mathcal{V}_i$ contains $n/2^i$ vertices $v^i_1, \dots, v^i_{n/2^i}$. In each cluster $\mathcal{V}_i$, vertices are located on a line where the distance of every consecutive pair of vertices is $2^{i+1}$. Moreover, for every $i,j\le L$, the minimum distance between any two clusters $\mathcal{V}_i$ and $\mathcal{V}_j$ is $M \gg n$. Finally, the terminals are of the form $\{(v^i_1, v^i_2), \ldots (v_{n/2^i - 1}^i, v^i_{n/2^i})\}_{i\le L}$.

\begin{figure}[!h]
    \centering    \includegraphics[width=0.5\linewidth]{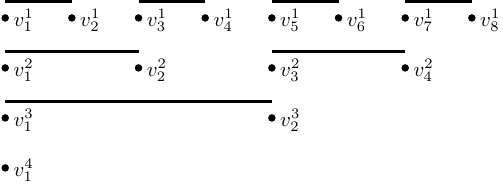}
    \caption{An example of $I_2$ with $15$ vertices. The endpoints of line segments denote terminal pairs.} 
    \label{fig:I2}
\end{figure}

For any level $j< L$ in the MIS based algorithm, all pairs in clusters $\mathcal{V}_{j}, \ldots, \mathcal{V}_{L}$ are active and it is straightforward to check that the size of every MIS is $O(n/ 2^j)$. Then, the estimate will have cost $\sum_{j < L} 2^j \cdot O(n/2^j) = O(n L) = O(n \cdot \log k)$. Moreover, the optimal Steiner Forest cost is $\sum_{j \le L} 2^j \cdot n/2^j = O(nL) = O(n \cdot \log k)$.
\end{proof}
\section{Sublinear Time MIS under the Adjacency Matrix Model}\label{sec:MIS}

In this section, we design a sublinear time algorithm for estimating the MIS size to a multiplicative $1+\eps$ approximation factor.

\subsection{Preliminaries}

We begin by stating the basic definitions and facts needed in this section.
We denote the input graph by $(V,M)$ to emphasize that the edge set is accessed via an adjacency matrix oracle, and
we set $n = |V|$.

\begin{definition}
\textnormal{(Permutations)}
    \begin{itemize}
        \item For a set $V$ we denote by $S(V)$ the set of permutations of $V$, i.e., bijections from $\{1,...,|V|\}$ to $V$.
        \item For a subset $V' \subseteq V$ and a permutation $\pi \in S(V)$
we define the \emph{restriction of $\pi$ to $V'$}, denoted by $\pi[V']$,
as the permutation of $V'$ obtained from $V$ by keeping only those elements that belong to $V'$ and renumbering appropriately
(i.e., $\pi[V'](i)$ is the $i$-th element of $V'$ among $\pi(1), \pi(2), ...$).
    \end{itemize}
\end{definition}

\begin{proposition}[Chernoff Bound]\label{prop:chernoff}
    Consider independent Bernoulli random variables $X_1, X_2, \dots, X_r$, and define their sum as  $X = \sum_{i=1}^{r} X_i$. For any $k > 0$, the probability that $X$ deviates from its expectation can be bounded by 
    \[
        \Pr\big(|X - \mathbb{E}[X]| \geq k\big) \leq 2 \exp \left(-\frac{k^2}{3\mathbb{E}[X]}\right).
    \]
\end{proposition}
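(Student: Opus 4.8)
The plan is to prove this by the classical exponential-moment (Chernoff) method, bounding the upper and lower deviations separately and adding the two estimates, which accounts for the factor $2$. Write $\mu = \E[X] = \sum_{i=1}^r p_i$ with $p_i = \Pr(X_i = 1)$. The bound is only informative in the regime $k \le \mu$, so I would focus there (for larger $k$ one either does not need the estimate or invokes a cruder Bernstein-type tail).

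For the upper tail, fix $t > 0$ and apply Markov's inequality to the nonnegative random variable $e^{tX}$:
\[
\Pr\big(X \ge \mu + k\big) \le e^{-t(\mu + k)}\, \E\big[e^{tX}\big] = e^{-t(\mu+k)} \prod_{i=1}^r \E\big[e^{tX_i}\big],
\]
using independence of the $X_i$ in the last step. Since $\E[e^{tX_i}] = 1 + p_i(e^t - 1) \le \exp\!\big(p_i(e^t-1)\big)$, the product is at most $\exp\!\big(\mu(e^t-1)\big)$, hence $\Pr(X \ge \mu+k) \le \exp\!\big(\mu(e^t - 1) - t(\mu+k)\big)$. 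Minimizing the exponent over $t>0$ gives $t = \ln(1 + k/\mu)$ and the bound $\exp\!\big(-\mu\,\varphi(k/\mu)\big)$, where $\varphi(x) = (1+x)\ln(1+x) - x$. The lower tail is handled symmetrically by applying Markov to $e^{-tX}$ for $t>0$; the analogous optimization yields $\Pr(X \le \mu - k) \le \exp\!\big(-\mu\,\psi(k/\mu)\big)$ with $\psi(x) = (1-x)\ln(1-x) + x$.

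It then remains to feed in two elementary scalar inequalities valid for $x \in (0,1]$: namely $\varphi(x) \ge x^2/3$ (which follows from the sharper $\varphi(x) \ge \tfrac{x^2}{2 + 2x/3} \ge \tfrac{3}{8}x^2 \ge \tfrac13 x^2$ for $x\le 1$) and $\psi(x) \ge \tfrac12 x^2 \ge \tfrac13 x^2$. Taking $x = k/\mu \le 1$, each of the two tail probabilities becomes at most $\exp\!\big(-\mu (k/\mu)^2/3\big) = \exp\!\big(-k^2/(3\mu)\big)$, and summing the upper- and lower-tail bounds yields the claimed $2\exp(-k^2/(3\E[X]))$. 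The only non-mechanical step is verifying these inequalities for $\varphi$ and $\psi$ — a routine calculus check (compare derivatives at $0$, or use the series expansions of $\ln(1\pm x)$) — together with keeping track that the clean constant $3$ is what forces the restriction $k \le \E[X]$, which is precisely the regime in which the proposition is applied. Since this is a textbook fact, an equally acceptable route is simply to cite a standard reference on concentration inequalities and omit the computation.
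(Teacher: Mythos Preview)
The paper does not prove this proposition at all; it is stated as a standard fact (a ``Proposition'' with no proof) and simply invoked later in the analysis of \cref{alg:addmul}. Your closing remark --- that one could just cite a standard reference and omit the computation --- is precisely the route the paper takes.

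Your sketch itself is the textbook exponential-moment derivation and is correct in the regime $k \le \E[X]$, which you rightly flag as the one where the constant $3$ is valid (indeed, $\varphi(x) \ge x^2/3$ fails for $x$ somewhat larger than $1$, so the proposition's phrasing ``for any $k>0$'' is slightly loose). This is also the only regime in which the paper applies the bound, so the restriction is harmless here.
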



\begin{definition} \label{def:mult-add-apx}
For $\alpha \geq 1$, $\beta \ge 0$, we say that $\widetilde{x}$ is an $(\alpha,\beta)$-estimate of a true value $x$ if and only if  $x \leq \widetilde{x} \leq \alpha x + \beta$. We say that an algorithm is an $(\alpha,\beta)$ [multiplicative-additive] approximation if it computes such an estimate of the relevant true value.
\end{definition}

\begin{definition}
    A maximal independent set (MIS) is a subset $V' \subseteq V$ of vertices such that
    for any edge $(u,v) \in M$, at most one of $u$, $v$ is in $V'$, and for any $v \in V \setminus V'$, some neighbor of $v$ is in $V'$. 
\end{definition}

\begin{definition}
    Given a permutation $\pi \in S(V)$, start with an empty set $I := \emptyset$, and for $i=1,...,n$, add vertex $\pi(i)$ to $I$ if none of its neighbors are in $I$.
    The set $\RGMIS(\pi)$ is defined as the resulting set $I$.
\end{definition}

\begin{fact}
    For any $\pi \in S(V)$, the set $\RGMIS(\pi)$ is an MIS.
    Moreover, for any $v \in V$,
    we have that
    $v \in \RGMIS(\pi)$ if any only if for every neighbor $u$ of $v$ of lower rank in $\pi$ (i.e., $\pi^{-1}(u) < \pi^{-1}(v)$) we have $u \not \in \RGMIS(\pi)$.
\end{fact}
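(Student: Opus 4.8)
The statement has two parts, both of which I would prove directly from the definition of $\RGMIS(\pi)$, which processes the vertices in the order $\pi(1), \pi(2), \ldots, \pi(n)$ and greedily adds a vertex to $I$ precisely when none of its already-inserted neighbors block it. The first task is to show $\RGMIS(\pi)$ is an MIS; the second is to establish the local characterization: $v \in \RGMIS(\pi)$ if and only if every neighbor $u$ of $v$ with $\pi^{-1}(u) < \pi^{-1}(v)$ satisfies $u \notin \RGMIS(\pi)$. Neither part is hard; the main (mild) subtlety is making the ``only if'' direction of the characterization precise, i.e., that no \emph{higher}-rank neighbor can affect whether $v$ is added, since the greedy process has already committed on $v$ by the time it examines those neighbors.

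\textbf{Independence.} I would first argue that $I = \RGMIS(\pi)$ is an independent set. Suppose toward contradiction that $(u,v) \in M$ with both $u, v \in I$; without loss of generality $\pi^{-1}(u) < \pi^{-1}(v)$. By definition of the process, $u$ is inserted into $I$ at step $\pi^{-1}(u)$ and is never removed, so $u \in I$ at the moment step $\pi^{-1}(v)$ is executed. Since $u$ is a neighbor of $v$ already in $I$, the process does \emph{not} add $v$, contradicting $v \in I$. Hence $I$ is independent.

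\textbf{Maximality.} Next I would show that every $v \in V \setminus I$ has a neighbor in $I$. If $v$ was not added at step $\pi^{-1}(v)$, then by the rule of the process it must be that at that step some neighbor $u$ of $v$ was already in $I$; since elements are never removed, $u \in I$ at the end as well. So $v$ has a neighbor in the final set $I$, which establishes maximality. Combined with independence, this proves $\RGMIS(\pi)$ is an MIS.

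\textbf{Local characterization.} Finally I would prove the equivalence. Fix $v$ and let $j = \pi^{-1}(v)$. The key observation is that the set of elements of $I$ present when step $j$ is executed consists exactly of those $w$ with $\pi^{-1}(w) < j$ that were added, and the decision at step $j$ depends only on whether any neighbor of $v$ is among them. For the forward direction: if $v \in I$, then by the argument above no neighbor of $v$ of lower rank was in $I$ at step $j$; and, as in the Independence paragraph, no neighbor of $v$ of lower rank can ever be in $I$ at the end either (if one were, it would have been there at step $j$ and blocked $v$) — so every lower-rank neighbor $u$ of $v$ has $u \notin I$. For the reverse direction: suppose every lower-rank neighbor $u$ of $v$ has $u \notin I$. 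At step $j$, the elements of $I$ present are all of lower rank than $j$; by assumption none of $v$'s neighbors among them are in $I$, so the process adds $v$, i.e., $v \in I$. This completes the proof of the Fact. The only place requiring care is noting that higher-rank neighbors are irrelevant to membership of $v$ — a one-line remark that step $j$'s decision is final and is made before any higher-rank vertex is processed.
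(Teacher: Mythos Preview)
Your proof is correct. The paper states this as a \emph{Fact} with no accompanying proof, treating it as folklore that follows immediately from the definition of the greedy $\RGMIS$ process; your write-up supplies exactly the routine verification one would expect, and all three parts (independence, maximality, and the local characterization) are argued cleanly.
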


The following oracle $\cA$,
proposed by Nguyen and Onak~\cite{NguyenOnakFOCS08}
and analyzed by Yoshida, Yamamoto and Ito~\cite{YoshidaYISTOC09},
answers the query of whether a vertex $v$ is in $\RGMIS(\pi)$.
We call it an ``abstract" oracle since it does not specify how to implement Line~1,
and we will only use the bound on its query complexity (i.e., number of recursive calls to $\cA$) given by \cref{thm:yyi}.

\begin{algorithm}[H]
\caption{$\cA(\pi,v)$ (abstract $\RGMIS$ oracle~\cite{NguyenOnakFOCS08,YoshidaYISTOC09})}
\label{alg:abstract-oracle}

let $u_1, u_2, ..., u_k$ be all neighbors of $v$, sorted in increasing order of their ranks in $\pi$

\For{$i=1,...,k$}{
    \If{\textnormal{$u_i$ has lower rank than $v$ in $\pi$}}{
        \If{$\cA(\pi,u_i)$}{
            \Return \textbf{false}
        }
    }
}

\Return \textbf{true}
\end{algorithm}

\begin{theorem}[\cite{YoshidaYISTOC09}] \label{thm:yyi}
    Let $T_\cA(\pi,v)$ be the total number of recursive calls to $\cA$ for a query $\cA(\pi,v)$.
    Then we have
    \[
        \E_{v \in V, \pi \in S(V)} \left[T_\cA(\pi,v)\right] \le 1 + \frac{|M|}{|V|} .
    \]
\end{theorem}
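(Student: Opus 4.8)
The plan is to bound $\E_{v,\pi}[T_\cA(\pi,v)]$ by linearity of expectation, counting the invocations of $\cA$ in the recursion tree $\mathcal{T}_v$ of the call $\cA(\pi,v)$. First I would replace the random permutation by i.i.d.\ uniform ranks $r\colon V\to[0,1]$ (almost surely distinct): these induce the same distribution on $\pi$ but make the ``recurse only toward smaller rank'' structure explicit, and they guarantee $\mathcal{T}_v$ is finite, since ranks strictly decrease along every root-to-node path. Writing $T_\cA(\pi,v)=1+(\#\text{ non-root invocations in }\mathcal{T}_v)$ and averaging over $v$, the theorem reduces to proving $\E_\pi[N]\le|M|$, where $N:=\sum_{v\in V}(\#\text{ non-root invocations in }\mathcal{T}_v)$ is the total number of non-root invocations over all $n$ starting vertices.

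Next I would group the non-root invocations by the directed edge leading into them. An invocation of $\cA(\pi,w)$ occurring inside some $\mathcal{T}_v$ is spawned by a parent invocation of $\cA(\pi,u)$, and — because the execution of $\cA$ is deterministic given $\pi$ — whether $\cA(\pi,u)$ spawns such a child depends only on $(u,w)$ and $r$: it does so (exactly once) iff $\{u,w\}\in M$, $r(w)<r(u)$, and no neighbor of $u$ of rank $<r(w)$ lies in $\RGMIS(\pi)$. Call this event ``the step $u\to w$ is enabled''; its last, non-$\RGMIS$ clause is precisely what records that $\cA(\pi,u)$ aborts (the \textbf{return false} in \cref{alg:abstract-oracle}) as soon as one of its lower-rank neighbors is discovered in $\RGMIS$, and this early termination is the entire reason the final bound is polynomial, rather than exponential, in the degree. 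Consequently
\[
N=\sum_{(u,w)\in M}\mathbf 1[u\to w\text{ enabled}]\cdot P_\pi(u),
\]
where $P_\pi(u)$ is the total number of invocations of $\cA(\pi,u)$ over all trees $\mathcal{T}_v$. Hence it suffices to show that for each ordered adjacent pair $(u,w)$,
\[
\E_\pi\!\left[\mathbf 1[u\to w\text{ enabled}]\cdot P_\pi(u)\right]\ \le\ 1
\]
(or, more weakly, that these $|M|$ terms sum to at most $|M|$); summing then gives $\E_\pi[N]\le|M|$, and therefore $\E_{v,\pi}[T_\cA]=1+\tfrac{1}{|V|}\sum_{(u,w)\in M}\E_\pi[\cdots]\le 1+|M|/|V|$.

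This per-pair estimate is the heart of the matter and the step I expect to be the main obstacle: it must balance two strongly correlated quantities — the ``enabled'' event, which forces all of $u$'s lowest-rank neighbors to lie outside $\RGMIS$, against $P_\pi(u)$, the number of recursion branches descending into $u$ from above (which on its own can be $\Theta(n)$ with constant probability, e.g.\ for a star centered at $u$). I would attack it by deferred decisions on the ranks: reveal the vertices in increasing order of rank, so that each vertex's $\RGMIS$-membership is settled the instant it is revealed (it joins iff none of its already-revealed neighbors did), while tracking the expected number of still-unfinished recursion branches that could be continued downward into the not-yet-revealed high-rank portion. The decisive point is that every time a newly revealed vertex joins $\RGMIS$, it kills all further descent at each of its neighbors — exactly the mechanism that keeps $P_\pi(u)$ small on the ``enabled'' event. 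The bookkeeping here is delicate and is where one needs careful symmetry/averaging arguments over the rank distribution, in the same combinatorial spirit as the ballot-type enumeration the paper invokes later for its own oracle. Once this lemma is in hand the theorem follows immediately from the two reductions above.
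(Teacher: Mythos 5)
The paper does not actually prove this theorem: it is cited verbatim from Yoshida--Yamamoto--Ito \cite{YoshidaYISTOC09} and used as a black box, so there is no in-paper argument to compare against. Judged on its own, your outline sets up the right bookkeeping. Passing to i.i.d.\ ranks is standard, and the identity $\E_{v,\pi}[T_\cA]=1+\tfrac{1}{n}\E_\pi[N]$ (with $N$ the total non-root invocations over all $n$ starting vertices) is correct. So is the decomposition of $N$ by directed edges: since the execution of $\cA(\pi,u)$ is a deterministic function of $\pi$, every one of the $P_\pi(u)$ invocations of $\cA(\pi,u)$ makes the identical set of child calls, and the ``enabled'' predicate you write down for the step $u\to w$ is the right one. (One small notational slip: the sum is over \emph{ordered} pairs $(u,w)$ with $\{u,w\}\in M$, of which there are $2|M|$, but for any fixed $\pi$ exactly one direction per edge can be enabled, so the natural per-unordered-edge quantity $Z_e=\mathbf 1[u\to w]P_\pi(u)+\mathbf 1[w\to u]P_\pi(w)$ with $N=\sum_e Z_e$ is what you want to show satisfies $\E[Z_e]\le 1$; ``per ordered pair $\le 1$'' alone would only give $2|M|$.)

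The genuine gap is exactly where you flag it: the per-edge bound $\E_\pi[Z_e]\le 1$ is asserted but not proved, and everything preceding it is elementary accounting that does not touch the probabilistic content of the theorem. The two factors $\mathbf 1[u\to w\text{ enabled}]$ and $P_\pi(u)$ are both complicated, mutually correlated functions of $\pi$ -- the first says $u$'s lowest-rank neighbors all fail to join $\RGMIS$, and the second counts how often the recursion from \emph{all} starting vertices lands on $u$ -- and in the star example the product is balanced exactly so that the bound is tight, which shows there is no slack to exploit with a crude union bound or independence heuristic. Your ``reveal ranks in increasing order and track unfinished branches'' paragraph is a plausible direction (and in the spirit of the ballot-type reasoning the paper uses elsewhere), but it is a one-sentence gesture at what is, in the YYI paper, the entire technical argument. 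As written, the proposal reduces the theorem to an unproved lemma of essentially the same difficulty; it does not constitute a proof.
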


We remark that the right-hand side is at most $O(\Delta)$,
and that the bound is in expectation over both $\pi$ and $v$.


\subsection{Single-sample algorithm} \label{sec:single-sample-mis}

As discussed in the introduction, a straightforward implementation of Line~1 in \cref{alg:abstract-oracle} would result in a running time of $O(\Delta n)$.
In this subsection we describe our more efficient implementation of the vertex oracle in the adjacency matrix model, \cref{alg:O}.
We note that if we are satisfied with a running time of $O(n)$,
we can materialize the entire random permutation $\pi$.
Then, when the oracle is called for a vertex $v$ and needs to iterate over its neighbors in increasing order of rank,
we loop over vertices $\pi(1)$, $\pi(2)$, $\pi(3)$, ..., and check whether each of them is a neighbor of $v$. 

\begin{algorithm}[H]
\caption{$\cO(\pi,v)$ (our MIS oracle for the adjacency matrix model)}
\label{alg:O}
let $k = \pi^{-1}(v)$ be the rank of $v$ in $\pi$

\For{$i=1,...,k-1$}{
    \If{$(\pi(i),v) \in M$}{
        \If{$\cO(\pi,\pi(i))$}{
            \Return \textbf{false}
        }
    }
}

\Return \textbf{true}
\end{algorithm}

In terms of the recursive calls being made and the results returned,
the oracles $\cO$ and $\cA$ (\cref{alg:abstract-oracle,alg:O}) are clearly equivalent.
However, one might worry that $\cO$ (\cref{alg:O}) potentially spends a lot of time querying non-edges on Line~3.

Intuitively, for a vertex $u$ of degree $\deg(u)$, it takes expected $n/\deg(u)$ time to find a neighbor of $u$ if we queried the adjacency matrix at random.
Vertex degrees in a graph can vary widely, but this is not an issue:
the proof of \cref{thm:yyi} in~\cite{YoshidaYISTOC09}
in fact shows that for any vertex $u$,
the expected number of recursive calls out of $\cA(\pi,u)$
is at most $\deg(u)/n$
for a random ``top-level" query vertex~$v$.
Thus, one could hope that if finding one neighbor of $u$ to call recursively took $n/\deg(u)$ time (i.e., queries to the adjacency matrix),
then the total runtime would be given by a calculation such as $\sum_{u \in V} \frac{\deg(u)}{n} \cdot \frac{n}{\deg(u)} = O(n)$.

However, such an argument is invalidated by the fact that
both the trajectory of the recursive oracle calls (to $\cO$)
and the sequence of adjacency matrix calls (to $M$, in order to find neighbors)
are functions of the random permutation $\pi$,
and thus seem highly correlated.\footnote{%
    An argument as outlined above was recently used
    by~\cite{mahabadi2025sublinear} in the context of maximal matchings.
    However, their algorithm uses fresh, independent randomness to query random entries of the adjacency matrix to find neighbors of the current vertex,
    which enables the argument.
}
In fact, every vertex~$u$ queries the same vertices $\pi(1)$, $\pi(2)$, ... to find its neighbors.
Nevertheless, we are still able to prove a running time bound of $O(n)$,
albeit by employing a different argument in the proof.

\begin{lemma} \label{lem:O-runtime}
    Let $T_\cO(\pi,v)$ be the total runtime of \cref{alg:O} (including all recursive calls) for query $\cO(\pi,v)$.
    Then we have
    \[
        \E_{v \in V, \pi \in S(V)} \left[T_\cO(\pi,v)\right] = O(n).
    \]
\end{lemma}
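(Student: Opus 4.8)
The plan is to couple the execution of $\cO$ on the input graph $(V,M)$ with the execution of the abstract oracle $\cA$ (\cref{alg:abstract-oracle}) on a carefully chosen \emph{larger} graph, in such a way that each adjacency-matrix query made by $\cO$ (every iteration of the loop on Line~2, whether it hits an edge or a non-edge) corresponds to one recursive call of $\cA$ on the larger graph. Concretely, for each vertex $u \in V$ I would attach a bundle of fresh "dummy" pendant vertices whose number is tuned so that $u$'s degree in the augmented graph is roughly $n$ (or a constant times $n$); since $\cA$ iterates over the neighbors of $u$ in increasing $\pi$-rank, and these dummy neighbors are interspersed among $u$'s real neighbors, the recursive calls of $\cA$ that "skip over" a dummy neighbor of $u$ will be in bijection with the non-edge probes that $\cO$ makes while scanning $\pi(1),\pi(2),\dots$ for a real neighbor of $u$. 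A recursive call of $\cA$ into a real neighbor corresponds to a recursive call of $\cO$. Thus $T_\cO(\pi,v)$ on the original graph is (up to constants) at most $T_\cA(\tilde\pi, v)$ on the augmented graph, for a permutation $\tilde\pi$ of the augmented vertex set that restricts to $\pi$ on $V$.

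The subtlety — and the place where a naive coupling breaks — is that $\cA$'s bound from \cref{thm:yyi} holds in expectation over a \emph{uniformly random} permutation of the augmented graph, and $\cO$ does not induce a uniformly random permutation of the augmented vertices: it only controls the relative order of the original vertices. So after the coupling I would argue that, conditioned on the positions of the original vertices (i.e., on $\pi$), the augmented permutation $\tilde\pi$ is still "uniform enough" on average — or, more robustly, I would show that over the random choice of $\pi$, the induced distribution on $\tilde\pi$ is within a constant factor (in the likelihood-ratio or total-variation sense restricted to the relevant event) of the uniform distribution on $S$ of the augmented graph. This is exactly where the ballot/voting combinatorics alluded to in the introduction enters: the event that the interleaving of dummy and real vertices is "favorable" (e.g., that no vertex's dummy block is pushed so far left that it distorts the query count) holds with at least a constant probability, by a Bertrand-type ballot argument. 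Conditioned on that event, one gets $\E[T_\cO(\pi,v)] = O(\E_{\tilde\pi}[T_\cA(\tilde\pi,v)]) \le O\!\left(1 + \frac{|\tilde M|}{|\tilde V|}\right) = O(n)$, since the augmented graph has $|\tilde V| = \Theta(n^2 / n) \cdot$-many$\;\dots$ — more precisely, $|\tilde V| = \Theta(n)$ extra vertices per original vertex gives $|\tilde V| = \Theta(n^2)$ and $|\tilde M| = \Theta(n^2)$, so the ratio is $O(1)$, hence $\E[T_\cO] = O(n)$ after accounting for the $\Theta(n)$ dummies collapsed per real vertex. (I will need to be careful bookkeeping the constant factors here so that the per-vertex $\approx n$ probes of $\cO$ are recovered as exactly the skipped-dummy recursive calls of $\cA$.)

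The main obstacle I anticipate is making the coupling and the "favorable permutation" event precise simultaneously: I need the augmented permutation to be coupled to $\pi$ tightly enough that query counts transfer, yet close enough to uniform that \cref{thm:yyi} is applicable with only a constant-factor loss. The cleanest route is probably to (i) fix the original permutation $\pi$, (ii) place the dummy vertices by independent uniform insertion into the gaps of $\pi$, (iii) observe this makes $\tilde\pi$ exactly uniform on the set of augmented permutations whose restriction to $V$ equals $\pi$, (iv) note that $T_\cA(\tilde\pi,v)$ restricted to the original-vertex recursion tree is unaffected by dummy placement, while the dummy-skip count along each real edge of the recursion tree is, conditioned on the tree, distributed as the number of dummies falling in a certain prefix — whose expectation is $\Theta(n)$ per real recursive call, and — and then (v) invoke \cref{thm:yyi} over the joint randomness of $\pi$ and the dummy insertions, which together form a uniform permutation of $\tilde V$, to conclude $\E[\#\text{recursive real calls}] = O(1 + |\tilde M|/|\tilde V|)$ and therefore $\E[T_\cO] = \Theta(n) \cdot O(1 + |\tilde M|/|\tilde V|) = O(n)$. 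Getting step (iv)–(v) to chain without hidden dependence between the shape of the recursion tree and the dummy counts is the delicate point; if a clean independence fails, the ballot-problem bound is the fallback that rescues a constant-probability version of the event, which still suffices after a union/restart argument.
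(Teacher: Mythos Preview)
Your high-level plan---couple $\cO$ on $(V,M)$ to $\cA$ on a larger graph and invoke \cref{thm:yyi}, with a Bertrand/ballot argument to control the coupling---is exactly the paper's strategy. But your specific construction has a fatal flaw: pendant dummies attached to $u$ have $u$ as their \emph{only} neighbor, so any dummy of lower rank than $u$ has no lower-rank neighbors and therefore $\cA$ on that dummy returns \textbf{true}. The first such dummy encountered forces $\cA(\tilde\pi,u)$ to return \textbf{false} immediately, so the recursion tree of $\cA$ on the augmented graph does \emph{not} contain the recursion tree of $\cO$ on the original graph, and your step~(iv) (``the original-vertex recursion tree is unaffected by dummy placement'') fails outright. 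There is no way to ``skip over'' a dummy in $\cA$ without modifying $\cA$ itself, at which point \cref{thm:yyi} no longer applies.

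The paper repairs this with a three-layer construction: a \emph{shared} middle layer $V_2$ of $2n$ vertices is joined to every vertex of $V$ \emph{and} to every vertex of a third layer $V_3$ of size $n$. The ``good permutation'' event requires $\pi_H(1)\in V_3$; this single $V_3$ vertex joins the MIS first and kills every $V_2$ vertex, so each $\cA$-call into $V_2$ returns false after one step and the $V$-restricted recursion tree is genuinely preserved. The second good-permutation condition---every prefix of $\pi_H$ has at least as many $V_2$ vertices as $V$ vertices---is exactly Bertrand's ballot problem with $p=2n$, $q=n$, giving constant probability; it lets one charge each adjacency-matrix probe of $\cO(\pi,u)$ (one per $V$-vertex in a prefix of $\pi_H$) to a distinct $V_2$-call made by $\cA(\pi_H,u)$. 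Because $V_2$ is shared across all $u$, this prefix inequality works uniformly, and because $|V(H)|=4n$ one avoids the $\Theta(n^2)$ blow-up and the awkward $|V|/|\tilde V|=\Theta(1/n)$ rescaling you were wrestling with. Your attempted route through per-vertex dummies and an independence argument in (iv)--(v) does not get off the ground until the ``dummies return false'' issue is solved, and solving it essentially forces you to the paper's shared-killer design.
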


We remark that if one is not worried about logarithmic factors,
then the strategy above can still be executed to obtain a $\tO(n)$ bound,
which one could argue is simpler than our proof below,
with the caveat that
this would also require modifying \cref{alg:O} to introduce caching of the returned answers.
See \cref{remark:simpler-proof} for a further discussion.

The rest of
this section
is devoted to the proof of \cref{lem:O-runtime}.
Our plan is to couple the execution of our \cref{alg:O} ($\cO$)
on the original input graph $(V,M)$
to the execution of \cref{alg:abstract-oracle} ($\cA$)
on a certain larger graph $H$.

We start by defining the latter.
Informally, we start with a copy of $(V,M)$ as a first layer.
As the second layer $V_2$, we adjoin $2n$ new vertices, with no edges inside.
As the third layer $V_3$, we adjoin $n$ new vertices, with no edges inside.
Finally, we connect each vertex in the second layer to each vertex in the first and the third layers.
Formally:

\begin{definition}
Let $H$ be a graph with vertex set $V(H) = V \cup V_2 \cup V_3$,
where $V$ is the vertex set of $(V,M)$,
$|V_2|=2n$, and $|V_3|=n$.
We define the edge set $E(H) = M \cup (V \times V_2) \cup (V_2 \times V_3)$.
\end{definition}

Next, we identify requirements on the permutation $\pi_H$ of vertices of $H$
that will enable our coupling argument.

\begin{definition} \label{def:good-pi}
    We say that a permutation $\pi_H \in S(V(H))$ is \emph{good}
    if: 
    \begin{itemize}
        \item the first vertex comes from $V_3$, i.e., $\pi_H(1) \in V_3$,
        \item in every prefix of $\pi_H$ there are at least as many vertices from $V_2$ as from $V$.
    \end{itemize}
\end{definition}

We show that a random permutation $\pi_H$ is indeed likely to be good:
\begin{lemma} \label{lem:good-perm}
    Consider a uniformly random permutation $\pi_H \in S(V(H))$.
    Then:
    \begin{enumerate}
        \item $\Prob[\text{$\pi_H$ is good}] \ge \frac{1}{12}$.
        \item If $\pi_H$ is a uniformly random {good} permutation, $\pi := \pi_H[V]$ ($\pi_H$ restricted to $V$) is uniformly random in $S(V)$.
    \end{enumerate}
\end{lemma}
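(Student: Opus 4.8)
The plan is to handle the two parts separately, with Part~1 being the substantive one. For Part~1, I would first observe that the two defining events in \cref{def:good-pi} are nearly independent after conditioning appropriately, so that the probability factorizes into a contribution from the ``$\pi_H(1) \in V_3$'' event and a contribution from the ballot-type event comparing $V_2$ and $V$. Concretely, $\Prob[\pi_H(1) \in V_3] = |V_3|/|V(H)| = n/(4n) = 1/4$. For the second event, the key reduction is that whether a prefix of $\pi_H$ contains at least as many $V_2$-elements as $V$-elements depends only on the \emph{relative order} in which the $2n$ elements of $V_2$ and the $n$ elements of $V$ appear among themselves (the positions of $V_3$-elements, and the first slot in particular, are irrelevant to this comparison once we look at the subsequence restricted to $V \cup V_2$). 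So I would argue that, conditioned on $\pi_H(1) \in V_3$, the relative order of $V \cup V_2$ within $\pi_H$ is still uniformly random, and hence
\[
\Prob[\text{$\pi_H$ good}] = \tfrac14 \cdot \Prob[\text{in a uniform interleaving of $2n$ ``$+$''s and $n$ ``$-$''s, every prefix has at least as many $+$'s as $-$'s}].
\]
Then I would invoke the classical ballot problem / cycle lemma: the number of sequences of $a$ up-steps and $b$ down-steps (with $a \ge b$) that stay nonnegative in every prefix is $\frac{a-b+1}{a+1}\binom{a+b}{b}$, so the probability is $\frac{a-b+1}{a+1} = \frac{2n-n+1}{2n+1} = \frac{n+1}{2n+1} \ge \tfrac13$. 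Multiplying, $\Prob[\text{good}] \ge \tfrac14 \cdot \tfrac13 = \tfrac1{12}$, as claimed. (This is presumably the ``connection to a combinatorial ballot voting problem'' the introduction alludes to.)

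For Part~2, the statement is that conditioning a uniform $\pi_H$ on being good does not bias the induced order on $V$. The cleanest way I see is an exchangeability / symmetry argument: fix any target permutation $\sigma \in S(V)$, and show $\Prob[\pi_H[V] = \sigma \mid \pi_H \text{ good}]$ is independent of $\sigma$. The point is that the event ``$\pi_H$ is good'' is invariant under relabeling the elements \emph{within} $V$ (and within $V_2$, and within $V_3$): goodness only refers to which \emph{layer} each position belongs to, never to the identity of the vertex within a layer. Formally, for any bijection $\rho: V \to V$, post-composing $\pi_H$ with $\rho$ (applied only to the $V$-coordinates) is a measure-preserving map on $S(V(H))$ that preserves the good event and transforms $\pi_H[V]$ into $\rho \circ \pi_H[V]$; averaging over $\rho$ shows $\pi_H[V]$ is uniform conditioned on goodness. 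I would write this as: the joint distribution of (layer-pattern of $\pi_H$, internal order of $V$, internal order of $V_2$, internal order of $V_3$) is a product of independent uniforms, the good event depends only on the first coordinate, so conditioning on it leaves the internal order of $V$ uniform, i.e.\ $\pi := \pi_H[V]$ is uniform in $S(V)$.

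The main obstacle I anticipate is making the reduction in Part~1 fully rigorous — specifically, cleanly justifying that after conditioning on $\pi_H(1) \in V_3$ the induced interleaving of $V$ and $V_2$ is still uniform, and that the first-vertex condition and the ballot condition can be treated as independent. The subtlety is that the $V_3$-elements occupy slots interspersed with the $V$/$V_2$ elements, so one must argue that deleting all $V_3$-positions from $\pi_H$ (including position $1$, which is forced to be a $V_3$-element) yields a uniformly random word in the alphabet $\{V, V_2\}$ with the right multiplicities. This follows from the standard fact that a uniform random permutation of a multiset, restricted to any fixed sub-alphabet, is a uniform random permutation of that sub-multiset — but I would state it as an explicit sublemma rather than wave at it, since the whole bound hinges on it. Once that is in place, the remaining pieces (the $1/4$ factor and the ballot count) are routine.
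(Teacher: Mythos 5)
Your proof is correct and takes essentially the same approach as the paper: factor the probability of goodness into the $\tfrac{1}{4}$ chance that $\pi_H(1)\in V_3$ times a ballot-type probability for the induced $V$-vs-$V_2$ interleaving, then handle Part~2 by a symmetry argument (the good event depends only on the layer pattern, not on identities within a layer). The one cosmetic difference is that you invoke the weak (non-strict) ballot count $\tfrac{a-b+1}{a+1}\binom{a+b}{b}$, which exactly matches \cref{def:good-pi} and gives the sharper conditional probability $\tfrac{n+1}{2n+1}>\tfrac12$, whereas the paper lower-bounds the same event by the strict Bertrand ballot theorem to obtain $\tfrac13$; both yield the claimed $\tfrac{1}{12}$.
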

\begin{proof}
    Note that $\pi_H(1) \in V_3$ with probability $\frac{|V_3|}{|V(H)|} = \frac{1}{4}$,
    and that conditioning on this event reveals no information about the relative order of vertices in $V \cup V_2$,
    so we can focus on the restricted permutation $\pi_H[V \cup V_2]$.
    We now employ a statement known as Bertrand's ballot theorem~\cite{whitworth,bertrand1887solution}:
    in an election where candidate A receives $p$ votes and candidate B receives $q$ votes
    with $p>q$,
    if the votes are counted in random order,
    the probability that A will be strictly ahead of B throughout the count is $\frac{p-q}{p+q}$.
    This maps to our scenario by setting $p = |V_2| = 2n$ and $q = |V| = n$,
    resulting in a probability of $\frac{2n-n}{2n+n} = \frac{1}{3}$.

    For  the second part,
    note that conditioning on both events reveals no information about the relative order of vertices in $V$.
\end{proof}

Now assume that $\pi_H \in S(V(H))$ is good,
and consider what happens in the $\RGMIS(\pi_H)$ process in $H$.
First, $\pi_H(1) \in V_3$ joins the MIS, and thus removes all $V_2$ vertices from consideration
(as it is connected to all of them).
Other $V_3$ vertices will join the MIS when they arrive.
All the remaining action happens on $(V,M)$ according to the permutation $\pi = \pi_H[V]$
($\pi$ restricted to $V$).
That is, we have $\RGMIS(\pi_H) = V_3 \cup \RGMIS(\pi)$.

Moreover, consider the tree of recursive calls made
when executing $\cO(\pi,v)$
for a query vertex $v \in V$
(in the original graph $(V,M)$),
and compare it to the tree of recursive calls made when executing $\cA(\pi_H,v)$ (in $H$).
The difference is that in the latter, there will be some \emph{additional} calls from a vertex in $V$ to $\cA(\pi_H,v_2)$ for a vertex $v_2 \in V_2$; that will then call $\cA(\pi_H,\pi_H(1))$, which returns true, so $\cA(\pi_H,v_2)$ returns false.
Otherwise, the two trees are the same.
In particular, the query complexity of $\cO(\pi,v)$
(the number of recursive calls made)
is at most that of $\cA(\pi_H,v)$.

However, $\cO$ also makes queries to $M$ (particularly for non-edges), and we aim to upper-bound their number.
We will charge this to the aforementioned \emph{additional} recursive calls of $\cA$;
to that end, we will use the second condition of \cref{def:good-pi}.

Consider an execution of $\cO(\pi,u)$ for any $u \in V$
(without counting its recursive calls).
In the no-case
(if $\cO(\pi,u)$ returns false)
it queries $M$ for pairs $(\pi(i),u)$ for $i=1,2,...,\ell$ for some $\ell$,
and $\cO(\pi,\pi(\ell))$ is the first direct recursive call that returns true.
In the yes-case (if $\cO(\pi,u)$ returns true) it queries
$M$ for pairs $(\pi(i),u)$ for $i=1,2,...,k-1$,
where $k=\pi^{-1}(u)$ is the rank of $u$ in~$\pi$;
denote $\ell := k-1$ in the yes-case.
Denote by $L$ the rank of $\pi(\ell)$
in $\pi_H$
(i.e., $\pi_H(L) = \pi(\ell)$).
Then, we can say that $\cA(\pi_H,u)$ must have made recursive calls to all neighbors of $u$ in $H$ with rank at most $L$ in $\pi_H$.
(In the yes-case, $\cA$ queries all neighbors of lower rank before returning true;
in the no-case, $\cA(\pi_H, \pi_H(L)) = \cA(\pi_H, \pi(\ell))$ is the first recursive call that returns true.)
These neighbors in particular include all vertices in $V_2$ of at most that rank,
because $u$ is connected to all $V_2$ vertices in $H$.
Thus,
the number of calls to $M$ made directly
(without counting recursive calls)
by an execution of
$\cO(\pi,u)$ for any $u \in V$
is
\begin{align*}
    \ell &= |\{\pi_H(1), ..., \pi_H(L)\} \cap V| \\
         &\le |\{\pi_H(1), ..., \pi_H(L)\} \cap V_2| \\
         &\le \text{number of \emph{additional} recursive calls to $V_2$ made directly by $\cA(\pi_H,u)$},
\end{align*}
where the first inequality follows since $\pi_H$ is good.
Now, summing this up over the entire execution tree of $\cO(\pi,v)$, we get that the total number of calls to $M$
is at most the number the \emph{additional} recursive calls to $V_2$ made by $\cA$.
Together with the above observation that the trees of $\cO(\pi,v)$ and $\cA(\pi_H,v)$ are the same when restricted to vertices in $V$, we have shown:
\begin{lemma} \label{lem:O-by-A}
    For any good $\pi_H \in S(V(H))$ and any $v \in V$,
    the running time of $\cO(\pi,v)$
    is upper-bounded by
    the oracle complexity of $\cA(\pi_H,v)$
    (up to a constant factor),
    i.e.,
    \[
        T_\cO(\pi,v) \le O(T_\cA(\pi_H,v)) ,
    \]
    where $\pi = \pi_H[V]$.
    \qed
\end{lemma}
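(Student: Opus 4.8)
The plan is to prove the stated inequality \emph{deterministically}, for each fixed good $\pi_H$, by exhibiting an explicit coupling between the recursion tree of $\cO(\pi,v)$ on $(V,M)$ and the recursion tree of $\cA(\pi_H,v)$ on $H$ (with $\pi=\pi_H[V]$), and then charging every adjacency-matrix probe made by $\cO$ against an ``extra'' recursive call that $\cA$ makes into the middle layer $V_2$. There are two quantities making up $T_\cO(\pi,v)$: the number of recursive calls, and the number of probes to $M$ (mostly for non-edges); each will be bounded by $O(T_\cA(\pi_H,v))$.

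First I would record what goodness buys structurally. Since $\pi_H(1)\in V_3$ and in $H$ this vertex is adjacent to every vertex of $V_2$, it joins $\RGMIS(\pi_H)$ and disqualifies all of $V_2$; every remaining decision concerns $V$ and follows $\pi=\pi_H[V]$, so $\RGMIS(\pi_H)=V_3\cup\RGMIS(\pi)$. In particular, for every $u\in V$ the calls $\cA(\pi_H,u)$ and $\cO(\pi,u)$ return the same boolean, and if from the recursion tree of $\cA(\pi_H,v)$ we delete every node whose argument lies in $V_2$ (each such node merely calls $\pi_H(1)$, receives \textbf{true}, and returns \textbf{false}; note a vertex of $V$ has no $H$-neighbor in $V_3$, so no other $V_3$-node can appear under a $V$-node), the result is exactly the recursion tree of $\cO(\pi,v)$. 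This already shows that the number of recursive calls made by $\cO(\pi,v)$ is at most $T_\cA(\pi_H,v)$.

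The substantive part is to bound the number of probes to $M$. Fix any node $\cO(\pi,u)$ of the tree and let $\ell$ be the last loop index it reaches, so it makes exactly $\ell$ probes $(\pi(1),u),\dots,(\pi(\ell),u)$ to $M$; in the no-case $\pi(\ell)$ is the first lower-$\pi$-rank $M$-neighbor of $u$ whose recursive call returns \textbf{true}, and in the yes-case $\ell=\pi^{-1}(u)-1$. Set $L=\pi_H^{-1}(\pi(\ell))$; since $\pi(\ell)$ precedes $u$ in $\pi$ it also precedes $u$ in $\pi_H$, hence $L<\pi_H^{-1}(u)$. I claim that the matching node $\cA(\pi_H,u)$ makes a direct recursive call to \emph{every} $H$-neighbor of $u$ of $\pi_H$-rank at most $L$. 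In the yes-case this is immediate since $\cA$ exhausts all lower-rank neighbors. In the no-case one checks, using the coupling of return values and the absence of $V$–$V_3$ edges, that every $H$-neighbor of $u$ of $\pi_H$-rank strictly below $L$ returns \textbf{false} (the $V$-neighbors among them were already found \textbf{false} by $\cO$; the $V_2$-vertices among them return \textbf{false} via $\pi_H(1)$), whereas $\pi_H(L)=\pi(\ell)$ returns \textbf{true}, so $\cA$ scans $u$'s $H$-neighbors exactly up to $\pi_H$-rank $L$. Because $u$ is adjacent in $H$ to all of $V_2$, those direct calls include one to every vertex of $V_2$ occurring among $\pi_H(1),\dots,\pi_H(L)$, and all such calls are of the deleted (``extra'') type. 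Finally, the $V$-vertices occurring among $\pi_H(1),\dots,\pi_H(L)$ are precisely $\pi(1),\dots,\pi(\ell)$, so there are $\ell$ of them, and since $\pi_H$ is good,
\[
   \ell = |\{\pi_H(1),\dots,\pi_H(L)\}\cap V| \le |\{\pi_H(1),\dots,\pi_H(L)\}\cap V_2| ,
\]
and the right-hand side is at most the number of extra $V_2$-calls made directly by $\cA(\pi_H,u)$.

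To conclude, I would sum this per-node bound over the recursion tree of $\cO(\pi,v)$: the total number of $M$-probes made by $\cO(\pi,v)$ is at most the total number of extra recursive calls into $V_2$ over the whole tree of $\cA(\pi_H,v)$, hence at most $T_\cA(\pi_H,v)$. Since the running time of $\cO(\pi,v)$ is, up to a constant factor, its number of recursive calls plus its number of $M$-probes, combining this with the bound of the second paragraph yields $T_\cO(\pi,v)=O(T_\cA(\pi_H,v))$. The step I expect to be the main obstacle is the middle-paragraph verification that $\cA(\pi_H,u)$ probes $u$'s $H$-neighbors exactly up to $\pi_H$-rank $L$ — in particular that no $H$-neighbor of lower rank returns \textbf{true} — since this is where the coupling of return values, the good-prefix condition on $\pi_H$, and the absence of $V$–$V_3$ edges must all be invoked together, and it must be argued separately in the yes- and no-cases.
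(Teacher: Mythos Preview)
Your proposal is correct and follows essentially the same approach as the paper: you couple the recursion trees of $\cO(\pi,v)$ and $\cA(\pi_H,v)$, observe they coincide on $V$-nodes, and then charge the $M$-probes at each node $\cO(\pi,u)$ against the extra $V_2$-calls made by the corresponding $\cA(\pi_H,u)$ via the good-prefix inequality $|\{\pi_H(1),\dots,\pi_H(L)\}\cap V|\le|\{\pi_H(1),\dots,\pi_H(L)\}\cap V_2|$. Your write-up is in fact slightly more careful than the paper's in places (explicitly noting the absence of $V$--$V_3$ edges and verifying in the no-case that all $H$-neighbors of $\pi_H$-rank below $L$ return \textbf{false}), but the argument is the same.
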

We note that \cref{thm:yyi} allows us to bound the expectation of $T_\cA$
over a random permutation $\pi_H \in S(V(H))$ (which may not be good)
and a random query vertex $v \in V(H)$ (which may not be in $V$).
However, $V$ constitutes a constant fraction of $V(H)$, and a constant fraction of permutations are good;
therefore, if the expectation over good permutations and over vertices in~$V$ was large,
the entire expectation would also be large (up to a constant factor).
Thus we have:
\begin{align*}
    \E_{v \in V, \; \pi \in S(V)}[T_\cO(\pi,v)]
    &= \E_{v \in V, \; \text{good} \; \pi_H \in S(V(H))}[T_\cO(\pi_H[V],v)] & \text{(by \cref{lem:good-perm}.2)}\\
    &\le O(1) \cdot \E_{v \in V, \; \text{good} \; \pi_H \in S(V(H))}[T_\cA(\pi_H,v)] & \text{(by \cref{lem:O-by-A})} \\
    &\le O(1) \cdot \E_{v \in V(H), \; \pi_H \in S(V(H))}[T_\cA(\pi_H,v)] & \text{(by \cref{lem:good-perm}.1)}\\
    &\le O\left(\frac{|E(H)|}{|V(H)|}\right) & \text{(by \cref{thm:yyi})} \\
    &= O(n).
\end{align*}
This concludes the proof of \cref{lem:O-runtime}.
\qed

\begin{remark} \label{remark:simpler-proof}
    We note that for a version of \cref{alg:O} where we introduce caching (storing and reusing the results of each recursive call), there is a different and slightly simpler proof of \Cref{lem:O-runtime} (ignoring $\log n$ factors in the running time),
    which was kindly pointed out to us by an anonymous reviewer. Specifically, in $\cA(\pi,u)$,
    one can ensure that it takes time $O(n/\deg(u) \cdot \log n)$
    to find each neighbor of $u$,
    with high probability for most permutations, via an application of the Chernoff bound. This is because if we consider a random permutation of vertex $u$'s row in the adjacency matrix, then, with high probability, any $\Theta(n/\deg(u) \cdot \log n)$ consecutive entries will contain at least one entry equal to 1.
    For permutations that do not satisfy this, we use a simple running time upper bound of $O(n^2)$ that holds when caching is used.
    
    We prefer
    to give a tight (up to $O(1)$ factors) analysis for the $\cO$-oracle (tight as it clearly runs in $\Omega(n)$ time). This is in the spirit of the analysis of \cite{YoshidaYISTOC09}, whose bound of $1 + |E|/|V|$ (\cref{thm:yyi}) is exactly tight and does not rely on caching.
\end{remark}

\subsection{Algorithm to estimate the MIS size}

\cref{alg:O} by itself only gives an answer for a single vertex.
To obtain an additive error of $\tO(n/s)$, we iterate it by sampling $\tO(s)$ vertices.
This routine, \cref{alg:addmul}, is one of the two major building blocks of our final algorithm for MIS.

\begin{algorithm}[H]
\caption{$\AlgAddMul(V,M,s,\pi)$}
\label{alg:addmul}
let $r := \frac{27s}{\eps^2} \log(n)$ 

\For{$i=1,...,r$}{
    sample a random vertex $v$ from $V$

    let $X_i := \cO(\pi,v)$ \Comment{invoke \Cref{alg:O}}
}

\Return $(1+\frac{\eps}{2})\cdot\left(\frac{\sum_i X_i}{r} \cdot n + \frac{\eps n}{3s}\right)$
\end{algorithm}

We remark that we add the $\Theta(\eps)$ terms to obtain single-sided additive error (see \cref{def:mult-add-apx}).

\begin{lemma} \label{lem:mis-add-mul}
    For any $\eps \in (0,1)$ and $s \ge 1$,
    \cref{alg:addmul} ($\AlgAddMul$),
    given a graph $(V,M)$
    with oracle access to its adjacency matrix,
    as well as a uniformly random permutation $\pi$ of its vertices,
    with probability $1-1/\poly(n)$
    reports a $(1 + \eps,\eps n / s)$-multiplicative-additive approximation
    to the value $|\RGMIS(\pi)|$
    and runs in expected $\tO(ns/\eps^2)$ time.
\end{lemma}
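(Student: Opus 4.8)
The plan is to condition on the given permutation $\pi$ and write $\mu := |\RGMIS(\pi)|$ and $p := \mu/n$. Since $\cO$ is call-for-call equivalent to $\cA$, which by the fact recalled above correctly decides whether $v \in \RGMIS(\pi)$, each $X_i = \cO(\pi,v)$ is exactly the indicator $\mathbf{1}[v \in \RGMIS(\pi)]$; as the vertices $v$ are drawn independently and uniformly from $V$, the $X_i$ are, conditioned on $\pi$, i.i.d.\ $\mathrm{Bernoulli}(p)$, so $X := \sum_{i=1}^r X_i$ has $\E[X] = rp$ and \cref{prop:chernoff} applies to it. Writing $\hat\mu := \frac{X}{r}\cdot n$ for the raw estimate (so that $\E[\hat\mu] = \mu$), the algorithm outputs $(1+\frac{\eps}{2})(\hat\mu + \frac{\eps n}{3s})$, and the goal is to show that this value lies in the interval $[\,\mu,\ (1+\eps)\mu + \frac{\eps n}{s}\,]$ with probability $1 - 1/\poly(n)$.

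I would split the analysis into two regimes. \emph{Dense regime:} $\mu \ge \frac{\eps n}{3s}$. Here $rp = \frac{27 s \mu \log n}{\eps^2 n} \ge \frac{9\log n}{\eps}$, so \cref{prop:chernoff} with deviation $k = \frac{\eps}{3} rp$ gives $|X - rp| \le \frac{\eps}{3}rp$, i.e.\ $(1-\frac{\eps}{3})\mu \le \hat\mu \le (1+\frac{\eps}{3})\mu$, with probability $1 - 1/\poly(n)$ (the constant in the definition of $r$ is chosen to make this hold). It then remains to check two inequalities: on the low side, $(1+\frac{\eps}{2})(\hat\mu + \frac{\eps n}{3s}) \ge (1+\frac{\eps}{2})(1-\frac{\eps}{3})\mu = (1 + \frac{\eps}{6}(1-\eps))\mu \ge \mu$; on the high side, $(1+\frac{\eps}{2})\hat\mu \le (1+\frac{\eps}{2})(1+\frac{\eps}{3})\mu \le (1+\eps)\mu$ for $\eps \le 1$, while $(1+\frac{\eps}{2})\cdot\frac{\eps n}{3s} \le \frac{\eps n}{2s} \le \frac{\eps n}{s}$, so the output is at most $(1+\eps)\mu + \frac{\eps n}{s}$. \emph{Sparse regime:} $\mu < \frac{\eps n}{3s}$. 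Now the lower bound is automatic and uses no randomness, since $(1+\frac{\eps}{2})(\hat\mu + \frac{\eps n}{3s}) \ge \frac{\eps n}{3s} > \mu$; for the upper bound I would use \cref{prop:chernoff} to obtain $\hat\mu \le \mu + \frac{\eps n}{6s}$ with probability $1-1/\poly(n)$ --- this is the event $X - rp \le \frac{9\log n}{2\eps}$, which follows from \cref{prop:chernoff} since $rp < \frac{9\log n}{\eps}$ --- and then $(1+\frac{\eps}{2})(\hat\mu + \frac{\eps n}{3s}) \le (1+\frac{\eps}{2})(\mu + \frac{\eps n}{2s}) \le (1+\eps)\mu + \frac{3\eps n}{4s} \le (1+\eps)\mu + \frac{\eps n}{s}$. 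Taking a union bound over the (mutually exclusive) two regimes gives the claimed $(1+\eps,\eps n/s)$-approximation with probability $1 - 1/\poly(n)$.

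For the running time, each of the $r = \frac{27 s}{\eps^2}\log n$ iterations makes a single call $\cO(\pi, v)$ with $v$ uniformly random in $V$ and $\pi$ the uniformly random permutation supplied to the algorithm; hence \cref{lem:O-runtime} bounds its expected cost by $O(n)$. By linearity of expectation (the $r$ calls share $\pi$, but linearity needs no independence), the expected total cost of the calls is $r \cdot O(n) = \tO(ns/\eps^2)$, to which we add the one-time $O(n)$ cost of materializing $\pi$ so that the rank lookup $\pi^{-1}(v)$ and the evaluations $\pi(i)$ inside $\cO$ each cost $O(1)$. Thus the expected running time is $\tO(ns/\eps^2)$.

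The step I expect to be the main obstacle is not conceptual but the bookkeeping in the correctness argument: one has to verify that the multiplicative padding $1+\frac{\eps}{2}$ and the additive padding $\frac{\eps n}{3s}$ built into the estimator are simultaneously large enough to absorb the two-sided Chernoff error in the dense regime and yet small enough that the output does not exceed the target $(1+\eps)\mu + \frac{\eps n}{s}$. A minor but related subtlety is the sparse regime, where $\E[X] = rp$ may be smaller than the additive deviation $\frac{9\log n}{2\eps}$ one wishes to rule out, so that \cref{prop:chernoff} is being invoked slightly outside the ``deviation at most the mean'' range; there one should either argue directly or appeal to the standard $\delta \ge 1$ form of the multiplicative Chernoff bound, under which the tail is only smaller.
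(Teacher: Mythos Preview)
Your approach is essentially the paper's: condition on $\pi$, apply the Chernoff bound to the empirical mean, split into two regimes to verify the padding constants, and use \cref{lem:O-runtime} with linearity of expectation for the runtime. The runtime argument and your flagging of the $\delta\ge 1$ Chernoff subtlety in the sparse regime are both fine.

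There is one quantitative slip in the dense regime. With your threshold $\mu\ge \eps n/(3s)$ you get $rp\ge 9(\log n)/\eps$, and taking deviation $k=\tfrac{\eps}{3}rp$ gives
\[
\Pr\big[|X-rp|\ge k\big]\le 2\exp\!\left(-\frac{k^2}{3rp}\right)=2\exp\!\left(-\frac{\eps^2 rp}{27}\right)\le 2\,n^{-\eps/3},
\]
which is \emph{not} $1/\poly(n)$ uniformly in $\eps\in(0,1)$; your parenthetical ``the constant in the definition of $r$ is chosen to make this hold'' is not correct here. The paper sidesteps this by applying Chernoff once with the $\eps$-free deviation $k=\sqrt{3\,\E[X]\log n}$ (so the failure probability is $2/n$ regardless of $\eps$), which yields $|\hat\mu-\mu|\le \tfrac{\eps}{3}\sqrt{n\mu/s}$, and then splits at $\mu\lessgtr n/s$ so that this error is at most $\tfrac{\eps n}{3s}$ or $\tfrac{\eps}{3}\mu$ respectively. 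With that single adjustment (one Chernoff event, threshold at $n/s$) your bookkeeping for the output goes through essentially unchanged.
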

\begin{proof}
    First, we prove that the output of $\AlgAddMul$ is a $(1 + \eps ,\eps n / s)$-multiplicative-additive approximation to the value of $|\RGMIS(\pi)|$. Since $v$ is chosen uniformly at random in Line 3 of $\AlgAddMul$, we have
    \begin{align*}
        \E[X_i] = \Pr[X_i = 1] = \frac{|\RGMIS(\pi)|}{n}.
    \end{align*}
    Let $X = \sum_{i=1}^r X_i$. Then,
    \begin{align*}
        \E[X] = \frac{r\cdot |\RGMIS(\pi)|}{n}.
    \end{align*}
    Since $X$ is the sum of $r$ independent Bernoulli random variables, we can apply the Chernoff bound (\Cref{prop:chernoff}), which implies
    \begin{align*}
        \Pr\big(|X - \mathbb{E}[X]| \geq \sqrt{3\E[X] \log n}\big) \leq 2 \exp \left(-\frac{ 3\E[X] \log n}{3\mathbb{E}[X]}\right) \leq \frac{2}{n}.
    \end{align*}
    Thus, with probability of $1 - 1/\poly(n)$, it holds that
    \begin{align*}
        \frac{nX}{r} &\in \frac{n(\E[X] \pm \sqrt{3\E[X] \log n})}{r} \\
        &= \frac{n\E[X]}{r} \pm \frac{\sqrt{3n^2 \E[X] \log n}}{r}\\
        & = |\RGMIS(\pi)| \pm \sqrt{\frac{3n|\RGMIS(\pi)| \log n}{r}} & (\text{since } \E[X] = \frac{r\cdot |\RGMIS(\pi)|}{n})\\
        & = |\RGMIS(\pi)| \pm \sqrt{\frac{\eps^2 n|\RGMIS(\pi)|}{9s}} & (\text{since } r = \frac{27s}{\eps^2} \log(n)).
    \end{align*}
    Let $\vartheta$ be the output of the algorithm. We consider two possible scenarios:
    \begin{itemize}
        \item $|\RGMIS(\pi)| \leq n/s$: In this case, we have
        \begin{align*}
            \vartheta = \left(1+\frac{\eps}{2} \right) \cdot \left( \frac{nX}{r} + \frac{\eps n}{3s}\right) &\in \left(1+\frac{\eps}{2} \right) \cdot \left(|\RGMIS(\pi)| + \frac{\eps n}{3s} \pm \sqrt{\frac{\eps^2 n|\RGMIS(\pi)|}{9s}} \right)\\
            & = \left(1+\frac{\eps}{2} \right) \cdot \left(|\RGMIS(\pi)| + \frac{\eps n}{3s} \pm \frac{\eps n}{3s}\right),
        \end{align*}
        where the last inequality follows by the assumption $|\RGMIS(\pi)| \leq n/s$. Thus,
        \begin{align*}
            |\RGMIS(\pi)| \leq \vartheta \leq (1+\eps)\cdot |\RGMIS(\pi)| + \frac{\eps n}{s}.
        \end{align*}
        \item $|\RGMIS(\pi)| > n/s$: In this case, we have
        \begin{align*}
           \vartheta = \left(1+\frac{\eps}{2} \right) \cdot \left( \frac{nX}{r} + \frac{\eps n}{3s}\right) &\in \left(1+\frac{\eps}{2} \right) \cdot \left(|\RGMIS(\pi)| + \frac{\eps n}{3s} \pm \sqrt{\frac{\eps^2 n|\RGMIS(\pi)|}{9s}} \right)\\
            & = \left(1+\frac{\eps}{2} \right) \cdot \left(|\RGMIS(\pi)| + \frac{\eps n}{3s} \pm \frac{\eps}{3} |\RGMIS(\pi)|\right)
        \end{align*}
        where we have the last inequality because  $|\RGMIS(\pi)| > n/s$. Hence,
        \begin{align*}
            |\RGMIS(\pi)| \leq \vartheta \leq (1+\eps)\cdot |\RGMIS(\pi)| + \frac{\eps n}{s}.
        \end{align*}
    \end{itemize}
    In both cases, the output of $\AlgAddMul$ is $(1 + \eps, \eps n /s)$-multiplicative-additive approximation to the value of $|\RGMIS(\pi)|$. Let $T(\pi)$ be the expected running time of $\AlgAddMul$ on permutation~$\pi$. Since we are sampling $r$ vertices with replacement, we have
    \begin{align*}
        T(\pi) = r \cdot \E_{v \in V}[T_\cO(\pi, v)].
    \end{align*}
    Therefore, the expected running time of the algorithm is
    \begin{align*}
        \E_\pi[T(\pi)] = \frac{\sum_\pi r \cdot \E_{v\in V}[T_\cO(\pi, v)]}{n!} &= r \cdot \E_{\pi \in S(V), v\in V}[T_\cO(\pi, v)]\\
        & = O(r\cdot n)\\
        & = \widetilde{O}(ns/\eps^2),
    \end{align*}
    which completes the proof.
\end{proof}

Now we can state our main \cref{alg:mainmis} and prove \cref{thm:mis-main}, which we restate for convenience.
Roughly, our main idea here is a meet-in-the-middle approach between estimating the MIS size using \cref{alg:addmul}, which works well if the MIS is large,
and building an MIS explicitly, which can be done if the MIS is small.

\begin{algorithm}[H]
\caption{$\AlgMul(V,M)$}
\label{alg:mainmis}
$\pi := $ uniformly random permutation of $V$

$s := \sqrt{n}$

$\CurrentMISSize := 0$

$\Active[v] := 1$ \textbf{for all} $v \in V$

$j := 0$

\While{$j < n$ \textbf{and} $\CurrentMISSize < s$}{
    $j := j + 1$

    \If{$\Active[\pi(j)] = 1$}{
        $\CurrentMISSize := \CurrentMISSize + 1$ \Comment{add $\pi(j)$ to the MIS}

        \For{$v \in V$}{
            \If{$v = \pi(j)$ \textbf{or} $(\pi(j),v) \in M$}{
                $\Active[v] := 0$
            }
        }
    }
}

\If{$\CurrentMISSize < s$ \label{line:currentmissize_s}}{
    \Return $\CurrentMISSize$ \Comment{we have explicitly built the entire MIS}
}\Else{
    $V' := \{v \in V : \Active[v] = 1\}$

    $\pi' := \pi[V']$ ($\pi$ restricted to $V'$)

    \Return $\CurrentMISSize + \AlgAddMul(V',M,s,\pi')$
}
\end{algorithm}

\mismainthm*

We remark that our algorithm generates a random permutation $\pi \in S(V)$
(or $O(\log n)$ such permutations to obtain the runtime bound with high probability)
and estimates $|\RGMIS(\pi)|$ for that permutation $\pi$;
it does not attempt to estimate $\E_\pi[|\RGMIS(\pi)|]$.

\begin{proof}
    First, we prove that the output of the algorithm is a $(1+\eps)$-approximation of the value of $|\RGMIS(\pi)|$. If the condition on \cref{line:currentmissize_s} of the algorithm holds, then the algorithm explicitly built the entire MIS, and the output is exactly equal to the size of $\RGMIS(\pi)$. Now, suppose that the condition on \cref{line:currentmissize_s} of the algorithm does not hold. Thus, $\CurrentMISSize = s = \sqrt{n}$. Note that 
    \begin{align}\label{eq:rgmis-rec}
        |\RGMIS(\pi)|= \sqrt{n} + |\RGMIS(\pi[V'])|.
    \end{align}
    Let $\vartheta$ be the output of the algorithm. Hence, we have
    \begin{align*}
        \vartheta &= \sqrt{n} + \AlgAddMul(V',M,s,\pi')\\ 
        &\geq \sqrt{n} + |\RGMIS(\pi[V'])| & (\text{by \Cref{lem:mis-add-mul}})\\
        & = |\RGMIS(\pi)| & (\text{by \Cref{eq:rgmis-rec}} ).
    \end{align*}
    On the other hand,
    \begin{align*}
        \vartheta &= \sqrt{n} + \AlgAddMul(V',M,s,\pi')\\
        & \leq \sqrt{n} + (1+\eps) \cdot |\RGMIS(\pi[V'])| + \frac{\eps n}{s} & (\text{by \Cref{lem:mis-add-mul}})\\
        & = \sqrt{n} + (1+\eps) \cdot |\RGMIS(\pi[V'])| + \eps \sqrt{n} & (\text{since } s = \sqrt{n})\\
        & = (1+\eps) \cdot \left(\sqrt{n} + |\RGMIS(\pi[V'])| \right)\\
        & = (1+\eps) \cdot |\RGMIS(\pi)| & (\text{by \Cref{eq:rgmis-rec}} ),
    \end{align*}
    which completes the proof for the approximation ratio.

   Regarding the running time, the while loop in the algorithm runs in $O(n^{3/2})$ time,
   as in each step it either skips over an inactive vertex, which happens at most $n$ times,
   or processes all vertices in $O(n)$ time; the latter happens at most
   $s = \sqrt{n}$ times. Furthermore, the relative order of the vertices in $V'$
   has no influence on the execution of the algorithm before it calls $\AlgAddMul$ on the last line;
   thus $\pi'$ is a uniformly random permutation of $V'$.
   Therefore, by \Cref{lem:mis-add-mul}, the expected running time of $\AlgAddMul$ called on the last line is $\widetilde{O}(n^{3/2}/\eps^2)$.
   
   To obtain a high-probability bound on the running time, we execute $O(\log n)$ instances of the algorithm in parallel and stop as soon as the first instance completes.
   By applying Markov’s inequality
   we conclude that each individual instance terminates within $\widetilde{O}(n^{3/2}/\eps^2)$ time with constant probability. Consequently, with high probability, at least one of these instances finishes within $\widetilde{O}(n^{3/2}/\eps^2)$ time. This completes the proof.
\end{proof}

\section*{Acknowledgment} The work was initiated while Sepideh Mahabadi and Ali Vakilian were long-term visitors at the Simons Institute for the Theory of Computing as part of the Sublinear Algorithms program.

\printbibliography

@article{ChenMetric-Arxiv22,
  author       = {Yu Chen and
                  Sanjeev Khanna and
                  Zihan Tan},
  editor       = {Kousha Etessami and
                  Uriel Feige and
                  Gabriele Puppis},
  title        = {Sublinear Algorithms and Lower Bounds for Estimating {MST} and {TSP}
                  Cost in General Metrics},
  journal    = {50th International Colloquium on Automata, Languages, and Programming,
                  {ICALP} 2023, July 10-14, 2023, Paderborn, Germany},
  series       = {LIPIcs},
  volume       = {261},
  pages        = {37:1--37:16},
  publisher    = {Schloss Dagstuhl - Leibniz-Zentrum f{\"{u}}r Informatik},
  year         = {2023},
  url          = {https://doi.org/10.4230/LIPIcs.ICALP.2023.37},
  doi          = {10.4230/LIPICS.ICALP.2023.37},
  timestamp    = {Wed, 05 Jul 2023 16:52:15 +0200},
  biburl       = {https://dblp.org/rec/conf/icalp/0039KT23.bib},
  bibsource    = {dblp computer science bibliography, https://dblp.org}
}

@inproceedings{chen2020,
  author =	{Yu Chen and Sampath Kannan and Sanjeev Khanna},
  title =	{{Sublinear Algorithms and Lower Bounds for Metric TSP Cost Estimation}},
  booktitle =	{47th International Colloquium on Automata, Languages, and Programming (ICALP 2020)},
  pages =	{30:1--30:19},
  series =	{Leibniz International Proceedings in Informatics (LIPIcs)},
  ISBN =	{978-3-95977-138-2},
  ISSN =	{1868-8969},
  year =	{2020},
  volume =	{168},
  myeditor =	{Artur Czumaj and Anuj Dawar and Emanuela Merelli},
  publisher =	{Schloss Dagstuhl--Leibniz-Zentrum f{\"u}r Informatik},
  address =	{Dagstuhl, Germany},
  myurl =		{https://drops.dagstuhl.de/opus/volltexte/2020/12437},
  URN =		{urn:nbn:de:0030-drops-124372},
  mydoi =		{10.4230/LIPIcs.ICALP.2020.30},
  annote =	{Keywords: sublinear algorithms, TSP, streaming algorithms, query complexity}
}

@inproceedings{BhattacharyaKS23,
  author       = {Sayan Bhattacharya and
                  Peter Kiss and
                  Thatchaphol Saranurak},
  title        = {Dynamic (1+$\epsilon$)-Approximate Matching Size in Truly Sublinear
                  Update Time},
  booktitle    = {64th {IEEE} Annual Symposium on Foundations of Computer Science, {FOCS}
                  2023, Santa Cruz, CA, USA, November 6-9, 2023},
  pages        = {1563--1588},
  publisher    = {{IEEE}},
  year         = {2023},
  url          = {https://doi.org/10.1109/FOCS57990.2023.00095},
  doi          = {10.1109/FOCS57990.2023.00095},
  timestamp    = {Mon, 05 Feb 2024 20:33:36 +0100},
  biburl       = {https://dblp.org/rec/conf/focs/BhattacharyaKS23.bib},
  bibsource    = {dblp computer science bibliography, https://dblp.org}
}

@inproceedings{kapralov2020space,
  title={Space efficient approximation to maximum matching size from uniform edge samples},
  author={Kapralov, Michael and Mitrovi{\'c}, Slobodan and Norouzi-Fard, Ashkan and Tardos, Jakab},
  booktitle={Proceedings of the Fourteenth Annual ACM-SIAM Symposium on Discrete Algorithms},
  pages={1753--1772},
  year={2020},
  organization={SIAM}
}

@article{chlebik2008steiner,
  title={The Steiner tree problem on graphs: Inapproximability results},
  author={Chleb{\'\i}k, Miroslav and Chleb{\'\i}kov{\'a}, Janka},
  journal={Theoretical Computer Science},
  volume={406},
  number={3},
  pages={207--214},
  year={2008},
  publisher={Elsevier}
}

@article{gilbert1968steiner,
  title={Steiner minimal trees},
  author={Gilbert, Edgar N and Pollak, Henry O},
  journal={SIAM Journal on Applied Mathematics},
  volume={16},
  number={1},
  pages={1--29},
  year={1968},
  publisher={SIAM}
}

@article{czumaj2009estimating,
  title={Estimating the weight of metric minimum spanning trees in sublinear time},
  author={Czumaj, Artur and Sohler, Christian},
  journal={SIAM Journal on Computing},
  volume={39},
  number={3},
  pages={904--922},
  year={2009},
  publisher={SIAM}
}

@inproceedings{chen2023query,
  title={Query Complexity of the Metric Steiner Tree Problem},
  author={Chen, Yu and Khanna, Sanjeev and Tan, Zihan},
  booktitle={Proceedings of the 2023 Annual ACM-SIAM Symposium on Discrete Algorithms (SODA)},
  pages={4893--4935},
  year={2023},
  organization={SIAM}
}

@inproceedings{Behnezhad21,
  author       = {Soheil Behnezhad},
  title        = {Time-Optimal Sublinear Algorithms for Matching and Vertex Cover},
  booktitle    = {62nd {IEEE} Annual Symposium on Foundations of Computer Science, {FOCS}},
  pages        = {873--884},
  year         = {2021}
}

@inproceedings{NguyenOnakFOCS08,
  author    = {Huy N. Nguyen and
               Krzysztof Onak},
  title     = {{Constant-Time Approximation Algorithms via Local Improvements}},
  booktitle = {49th Annual {IEEE} Symposium on Foundations of Computer Science, {FOCS}},
  pages     = {327--336},
  year      = {2008}
}

@inproceedings{OnakSODA12,
  author    = {Krzysztof Onak and
               Dana Ron and
               Michal Rosen and
               Ronitt Rubinfeld},
  title     = {{A Near-Optimal Sublinear-Time Algorithm for Approximating the Minimum
               Vertex Cover Size}},
  booktitle = {Proceedings of the Twenty-Third Annual {ACM-SIAM} Symposium on Discrete
               Algorithms, {SODA}},
  pages     = {1123--1131},
  year      = {2012}
}

@inproceedings{YoshidaYISTOC09,
  author    = {Yuichi Yoshida and
               Masaki Yamamoto and
               Hiro Ito},
  editor    = {Michael Mitzenmacher},
  title     = {An improved constant-time approximation algorithm for maximum matchings},
  booktitle = {Proceedings of the 41st Annual {ACM} Symposium on Theory of Computing,
               {STOC}},
  pages     = {225--234},
  publisher = {{ACM}},
  year      = {2009}
}

@inproceedings{BehnezhadRRS-SODA23,
  author    = {Soheil Behnezhad and
               Mohammad Roghani and
               Aviad Rubinstein and
               Amin Saberi},
  editor    = {Nikhil Bansal and
               Viswanath Nagarajan},
  title     = {Beating Greedy Matching in Sublinear Time},
  booktitle = {Proceedings of the 2023 {ACM-SIAM} Symposium on Discrete Algorithms,
               {SODA} 2023},
  pages     = {3900--3945},
  publisher = {{SIAM}},
  year      = {2023}
}

@inproceedings{TSP-icalp24,
  author       = {Soheil Behnezhad and
                  Mohammad Roghani and
                  Aviad Rubinstein and
                  Amin Saberi},
  title        = {Sublinear Algorithms for {TSP} via Path Covers},
  booktitle    = {51st International Colloquium on Automata, Languages, and Programming,
                  {ICALP}},
  series       = {LIPIcs},
  volume       = {297},
  pages        = {19:1--19:16},
  publisher    = {Schloss Dagstuhl - Leibniz-Zentrum f{\"{u}}r Informatik},
  year         = {2024}
}

@inproceedings{BehnezhadRR24,
  author       = {Soheil Behnezhad and
                  Mohammad Roghani and
                  Aviad Rubinstein},
  editor       = {Bojan Mohar and
                  Igor Shinkar and
                  Ryan O'Donnell},
  title        = {Approximating Maximum Matching Requires Almost Quadratic Time},
  booktitle    = {Proceedings of the 56th Annual {ACM} Symposium on Theory of Computing,
                  {STOC} 2024, Vancouver, BC, Canada, June 24-28, 2024},
  pages        = {444--454},
  publisher    = {{ACM}},
  year         = {2024},
  url          = {https://doi.org/10.1145/3618260.3649785},
  doi          = {10.1145/3618260.3649785},
  timestamp    = {Tue, 18 Jun 2024 09:24:05 +0200},
  biburl       = {https://dblp.org/rec/conf/stoc/BehnezhadRR24.bib},
  bibsource    = {dblp computer science bibliography, https://dblp.org}
}

@inproceedings{BehnezhadRR23b,
  author       = {Soheil Behnezhad and
                  Mohammad Roghani and
                  Aviad Rubinstein},
  title        = {Local Computation Algorithms for Maximum Matching: New Lower Bounds},
  booktitle    = {64th {IEEE} Annual Symposium on Foundations of Computer Science, {FOCS}
                  2023, Santa Cruz, CA, USA, November 6-9, 2023},
  pages        = {2322--2335},
  publisher    = {{IEEE}},
  year         = {2023},
  url          = {https://doi.org/10.1109/FOCS57990.2023.00143},
  doi          = {10.1109/FOCS57990.2023.00143},
  timestamp    = {Tue, 02 Jan 2024 15:09:54 +0100},
  biburl       = {https://dblp.org/rec/conf/focs/BehnezhadRR23.bib},
  bibsource    = {dblp computer science bibliography, https://dblp.org}
}

@inproceedings{BehnezhadRR23a,
  author       = {Soheil Behnezhad and
                  Mohammad Roghani and
                  Aviad Rubinstein},
  editor       = {Barna Saha and
                  Rocco A. Servedio},
  title        = {Sublinear Time Algorithms and Complexity of Approximate Maximum Matching},
  booktitle    = {Proceedings of the 55th Annual {ACM} Symposium on Theory of Computing,
                  {STOC} 2023, Orlando, FL, USA, June 20-23, 2023},
  pages        = {267--280},
  publisher    = {{ACM}},
  year         = {2023},
  url          = {https://doi.org/10.1145/3564246.3585231},
  doi          = {10.1145/3564246.3585231},
  timestamp    = {Mon, 22 May 2023 13:15:17 +0200},
  biburl       = {https://dblp.org/rec/conf/stoc/BehnezhadRR23.bib},
  bibsource    = {dblp computer science bibliography, https://dblp.org}
}

@inproceedings{BhattacharyaKS-STOC23,
  author    = {Sayan Bhattacharya and Peter Kiss and Thatchaphol Saranurak},
  title     = {Sublinear Algorithms for $(1.5 + \epsilon)$-Approximate Matching},
  year      = {2023},
  booktitle = {Proceedings of the 55th {ACM} Symposium on Theory of Computing, {STOC}
               2023, Orlando, Florida}
}

@inproceedings{gupta2015greedy,
  title={Greedy algorithms for Steiner forest},
  author={Gupta, Anupam and Kumar, Amit},
  booktitle={Proceedings of the forty-seventh annual ACM symposium on Theory of Computing},
  pages={871--878},
  year={2015}
}

@article{chazelle2005approximating,
  title={Approximating the minimum spanning tree weight in sublinear time},
  author={Chazelle, Bernard and Rubinfeld, Ronitt and Trevisan, Luca},
  journal={SIAM Journal on computing},
  volume={34},
  number={6},
  pages={1370--1379},
  year={2005},
  publisher={SIAM}
}

@inproceedings{mahabadi2025sublinear,
  title={Sublinear Metric Steiner Tree via Improved Bounds for Set Cover},
  author={Mahabadi, Sepideh and Roghani, Mohammad and Tarnawski, Jakub and Vakilian, Ali},
  booktitle={16th Innovations in Theoretical Computer Science Conference (ITCS 2025)},
  pages={74--1},
  year={2025},
  organization={Schloss Dagstuhl--Leibniz-Zentrum f{\"u}r Informatik}
}

@inproceedings{MahabadiRT25,
  author       = {Sepideh Mahabadi and
                  Mohammad Roghani and
                  Jakub Tarnawski},
  editor       = {Keren Censor{-}Hillel and
                  Fabrizio Grandoni and
                  Joel Ouaknine and
                  Gabriele Puppis},
  title        = {A 0.51-Approximation of Maximum Matching in Sublinear $n^{1.5}$
                  Time},
  booktitle    = {52nd International Colloquium on Automata, Languages, and Programming,
                  {ICALP} 2025, July 8-11, 2025, Aarhus, Denmark},
  series       = {LIPIcs},
  volume       = {334},
  pages        = {116:1--116:17},
  publisher    = {Schloss Dagstuhl - Leibniz-Zentrum f{\"{u}}r Informatik},
  year         = {2025},
  url          = {https://doi.org/10.4230/LIPIcs.ICALP.2025.116},
  doi          = {10.4230/LIPICS.ICALP.2025.116},
  timestamp    = {Mon, 30 Jun 2025 16:27:19 +0200},
  biburl       = {https://dblp.org/rec/conf/icalp/MahabadiRT25.bib},
  bibsource    = {dblp computer science bibliography, https://dblp.org}
}

@article{abrrfocs25,
  author       = {Amir Azarmehr and
                  Soheil Behnezhad and
                  Mohammad Roghani and
                  Aviad Rubinstein},
  title        = {Tight Pair Query Lower Bounds for Matching and Earth Mover’s Distance},
  journal    = {Proceedings of 66th {IEEE} Annual Symposium on Foundations of Computer Science, {FOCS}
                  2025, Sydney, Australia, December 14-17, 2025, To Appear},
  pages        = {},
  publisher    = {{IEEE}},
  year         = {2025},
  url          = {},
  doi          = {},
  timestamp    = {}
}

@inproceedings{indyk1999sublinear,
  title={Sublinear time algorithms for metric space problems},
  author={Indyk, Piotr},
  booktitle={Proceedings of the thirty-first annual ACM symposium on Theory of computing},
  pages={428--434},
  year={1999}
}

@article{agrawal1995trees,
  title={When Trees Collide: An Approximation Algorithm for the Generalized Steiner Problem on Networks},
  author={Agrawal, Ajit and Klein, Philip and Ravi, R},
  journal={SIAM Journal on Computing},
  volume={24},
  number={3},
  pages={440--456},
  year={1995},
  publisher={SIAM}
}

@article{goemans1995general,
  title={A general approximation technique for constrained forest problems},
  author={Goemans, Michel X and Williamson, David P},
  journal={SIAM Journal on Computing},
  volume={24},
  number={2},
  pages={296--317},
  year={1995},
  publisher={SIAM}
}

@inproceedings{gross2018local,
  title={A Local-Search Algorithm for Steiner Forest},
  author={Gro{\ss}, Martin and Gupta, Anupam and Kumar, Amit and Matuschke, Jannik and Schmidt, Daniel R and Schmidt, Melanie and Verschae, Jos{\'e}},
  booktitle={9th Innovations in Theoretical Computer Science Conference (ITCS 2018)},
  year={2018},
  organization={Schloss Dagstuhl--Leibniz-Zentrum f{\"u}r Informatik}
}

@article{czumaj2024streaming,
  title={Streaming algorithms for geometric steiner forest},
  author={Czumaj, Artur and Jiang, Shaofeng H-C and Krauthgamer, Robert and Vesel{\`y}, Pavel},
  journal={ACM Transactions on Algorithms},
  volume={20},
  number={4},
  pages={1--38},
  year={2024},
  publisher={ACM New York, NY}
}

@inproceedings{bamas2022improved,
  title={An Improved Analysis of Greedy for Online Steiner Forest},
  author={Bamas, {\'E}tienne and Drygala, Marina and Maggiori, Andreas},
  booktitle={Proceedings of the 2022 Annual ACM-SIAM Symposium on Discrete Algorithms (SODA)},
  pages={3202--3229},
  year={2022},
  organization={SIAM}
}

@inproceedings{ghalami2022parallel,
  title={A parallel approximation algorithm for the steiner forest problem},
  author={Ghalami, Laleh and Grosu, Daniel},
  booktitle={2022 30th Euromicro International Conference on Parallel, Distributed and Network-based Processing (PDP)},
  pages={47--54},
  year={2022},
  organization={IEEE}
}

@article{whitworth,
  author    = {W. Allen Whitworth},
  title     = {Arrangements of $m$ things of one sort and $n$ things of another sort under certain conditions of priority},
  journal   = {Messenger of Mathematics},
  volume    = {8},
  pages     = {105--114},
  year      = {1878}
}

@article{bertrand1887solution,
  title={Solution d’un probleme},
  author={Bertrand, Joseph},
  journal={CR Acad. Sci. Paris},
  volume={105},
  number={1887},
  pages={369},
  year={1887}
}

@inproceedings{zuckerman2006linear,
  title={Linear degree extractors and the inapproximability of max clique and chromatic number},
  author={Zuckerman, David},
  booktitle={Proceedings of the thirty-eighth annual ACM symposium on Theory of computing},
  pages={681--690},
  year={2006}
}

@inproceedings{hastad1996clique,
  title={Clique is Hard to Approximate Within $n^{1-\epsilon}$},
  author={Hastad, Johan},
  booktitle={Proceedings of 37th Conference on Foundations of Computer Science},
  pages={627--636},
  year={1996},
  organization={IEEE}
}

@inproceedings{luby1985simple,
  title={A simple parallel algorithm for the maximal independent set problem},
  author={Luby, Michael},
  booktitle={Proceedings of the seventeenth annual ACM symposium on Theory of computing},
  pages={1--10},
  year={1985}
}

@article{alon1986fast,
  title={A fast and simple randomized parallel algorithm for the maximal independent set problem},
  author={Alon, Noga and Babai, L{\'a}szl{\'o} and Itai, Alon},
  journal={Journal of algorithms},
  volume={7},
  number={4},
  pages={567--583},
  year={1986},
  publisher={Elsevier}
}

@inproceedings{ghaffari2016improved,
  title={An improved distributed algorithm for maximal independent set},
  author={Ghaffari, Mohsen},
  booktitle={Proceedings of the twenty-seventh annual ACM-SIAM symposium on Discrete algorithms},
  pages={270--277},
  year={2016},
  organization={SIAM}
}

@inproceedings{ghaffari2018improved,
  title={Improved massively parallel computation algorithms for mis, matching, and vertex cover},
  author={Ghaffari, Mohsen and Gouleakis, Themis and Konrad, Christian and Mitrovi{\'c}, Slobodan and Rubinfeld, Ronitt},
  booktitle={Proceedings of the 2018 ACM Symposium on Principles of Distributed Computing},
  pages={129--138},
  year={2018}
}

@inproceedings{ghaffari2022local,
  title={Local computation of maximal independent set},
  author={Ghaffari, Mohsen},
  booktitle={2022 IEEE 63rd Annual Symposium on Foundations of Computer Science (FOCS)},
  pages={438--449},
  year={2022},
  organization={IEEE}
}

@inproceedings{RubinfeldTVX11,
  author       = {Ronitt Rubinfeld and
                  Gil Tamir and
                  Shai Vardi and
                  Ning Xie},
  title        = {Fast Local Computation Algorithms},
  booktitle    = {Innovations in Computer Science - {ICS}},
  pages        = {223--238},
  publisher    = {Tsinghua University Press},
  year         = {2011}
}

@inproceedings{alon2012space,
  title={Space-efficient local computation algorithms},
  author={Alon, Noga and Rubinfeld, Ronitt and Vardi, Shai and Xie, Ning},
  booktitle={Proceedings of the twenty-third annual ACM-SIAM symposium on Discrete Algorithms},
  pages={1132--1139},
  year={2012},
  organization={SIAM}
}

@article{parter2019local,
  title={Local computation algorithms for spanners},
  author={Parter, Merav and Rubinfeld, Ronitt and Vakilian, Ali and Yodpinyanee, Anak},
  journal={arXiv preprint arXiv:1902.08266},
  year={2019}
}

@misc{chekuri2025streaming,
      title={Streaming Algorithms for Network Design}, 
      author={Chandra Chekuri and Rhea Jain and Sepideh Mahabadi and Ali Vakilian},
      year={2025},
      eprint={2503.00712},
      archivePrefix={arXiv},
      primaryClass={cs.DS},
      url={https://arxiv.org/abs/2503.00712}, 
}

@article{arviv2023improved,
  title={Improved Local Computation Algorithms for Constructing Spanners},
  author={Arviv, Rubi and Chung, Lily and Levi, Reut and Pyne, Edward},
  journal={Approximation, Randomization, and Combinatorial Optimization. Algorithms and Techniques},
  year={2023}
}

@inproceedings{lenzen2018centralized,
  title={A centralized local algorithm for the sparse spanning graph problem},
  author={Lenzen, Christoph and Levi, Reut},
  booktitle={45th International Colloquium on Automata, Languages, and Programming},
  pages={1--47},
  year={2018},
  organization={Schloss Dagstuhl}
}

@inproceedings{jin2024streaming,
  title={Streaming Algorithms for Connectivity Augmentation},
  author={Jin, Ce and Kapralov, Michael and Mahabadi, Sepideh and Vakilian, Ali},
  booktitle={51st International Colloquium on Automata, Languages, and Programming (ICALP 2024)},
  pages={93--1},
  year={2024},
  organization={Schloss Dagstuhl--Leibniz-Zentrum f{\"u}r Informatik}
}

@inproceedings{AilonCN05,
  author       = {Nir Ailon and
                  Moses Charikar and
                  Alantha Newman},
  editor       = {Harold N. Gabow and
                  Ronald Fagin},
  title        = {Aggregating inconsistent information: ranking and clustering},
  booktitle    = {Proceedings of the 37th Annual {ACM} Symposium on Theory of Computing,
                  Baltimore, MD, USA, May 22-24, 2005},
  pages        = {684--693},
  publisher    = {{ACM}},
  year         = {2005},
  url          = {https://doi.org/10.1145/1060590.1060692},
  doi          = {10.1145/1060590.1060692},
  timestamp    = {Sat, 30 Sep 2023 09:57:30 +0200},
  biburl       = {https://dblp.org/rec/conf/stoc/AilonCN05.bib},
  bibsource    = {dblp computer science bibliography, https://dblp.org}
}

@inproceedings{BehnezhadDHSS19,
  author       = {Soheil Behnezhad and
                  Mahsa Derakhshan and
                  MohammadTaghi Hajiaghayi and
                  Cliff Stein and
                  Madhu Sudan},
  editor       = {David Zuckerman},
  title        = {Fully Dynamic Maximal Independent Set with Polylogarithmic Update
                  Time},
  booktitle    = {60th {IEEE} Annual Symposium on Foundations of Computer Science, {FOCS}
                  2019, Baltimore, Maryland, USA, November 9-12, 2019},
  pages        = {382--405},
  publisher    = {{IEEE} Computer Society},
  year         = {2019},
  url          = {https://doi.org/10.1109/FOCS.2019.00032},
  doi          = {10.1109/FOCS.2019.00032},
  timestamp    = {Tue, 14 Jun 2022 13:12:40 +0200},
  biburl       = {https://dblp.org/rec/conf/focs/BehnezhadDHSS19.bib},
  bibsource    = {dblp computer science bibliography, https://dblp.org}
}

@inproceedings{BehnezhadCMT22,
  author       = {Soheil Behnezhad and
                  Moses Charikar and
                  Weiyun Ma and
                  Li{-}Yang Tan},
  title        = {Almost 3-Approximate Correlation Clustering in Constant Rounds},
  booktitle    = {63rd {IEEE} Annual Symposium on Foundations of Computer Science, {FOCS}
                  2022, Denver, CO, USA, October 31 - November 3, 2022},
  pages        = {720--731},
  publisher    = {{IEEE}},
  year         = {2022},
  url          = {https://doi.org/10.1109/FOCS54457.2022.00074},
  doi          = {10.1109/FOCS54457.2022.00074},
  timestamp    = {Sat, 30 Sep 2023 09:41:23 +0200},
  biburl       = {https://dblp.org/rec/conf/focs/BehnezhadCMT22.bib},
  bibsource    = {dblp computer science bibliography, https://dblp.org}
}

@inproceedings{DalirrooyfardMS24,
author = {Dalirrooyfard, Mina and Makarychev, Konstantin and Mitrovi\'{c}, Slobodan},
title = {Pruned Pivot: correlation clustering algorithm for dynamic, parallel, and local computation models},
year = {2024},
publisher = {JMLR.org},
booktitle = {Proceedings of the 41st International Conference on Machine Learning},
articleno = {394},
numpages = {22},
location = {Vienna, Austria},
series = {ICML'24}
}

@article{BrandtFU18,
  author       = {Sebastian Brandt and
                  Manuela Fischer and
                  Jara Uitto},
  title        = {Matching and {MIS} for Uniformly Sparse Graphs in the Low-Memory {MPC}
                  Model},
  journal      = {CoRR},
  volume       = {abs/1807.05374},
  year         = {2018},
  url          = {http://arxiv.org/abs/1807.05374},
  eprinttype    = {arXiv},
  eprint       = {1807.05374},
  timestamp    = {Thu, 28 Feb 2019 16:45:16 +0100},
  biburl       = {https://dblp.org/rec/journals/corr/abs-1807-05374.bib},
  bibsource    = {dblp computer science bibliography, https://dblp.org}
}

@inproceedings{GhaffariGJ20,
  author       = {Mohsen Ghaffari and
                  Christoph Grunau and
                  Ce Jin},
  editor       = {Hagit Attiya},
  title        = {Improved {MPC} Algorithms for MIS, Matching, and Coloring on Trees
                  and Beyond},
  booktitle    = {34th International Symposium on Distributed Computing, {DISC} 2020,
                  October 12-16, 2020, Virtual Conference},
  series       = {LIPIcs},
  volume       = {179},
  pages        = {34:1--34:18},
  publisher    = {Schloss Dagstuhl - Leibniz-Zentrum f{\"{u}}r Informatik},
  year         = {2020},
  url          = {https://doi.org/10.4230/LIPIcs.DISC.2020.34},
  doi          = {10.4230/LIPICS.DISC.2020.34},
  timestamp    = {Wed, 21 Aug 2024 22:46:00 +0200},
  biburl       = {https://dblp.org/rec/conf/wdag/GhaffariGJ20.bib},
  bibsource    = {dblp computer science bibliography, https://dblp.org}
}

@inproceedings{LattanziMSV11,
author = {Lattanzi, Silvio and Moseley, Benjamin and Suri, Siddharth and Vassilvitskii, Sergei},
title = {Filtering: a method for solving graph problems in MapReduce},
year = {2011},
isbn = {9781450307437},
publisher = {Association for Computing Machinery},
address = {New York, NY, USA},
booktitle = {Proceedings of the Twenty-Third Annual ACM Symposium on Parallelism in Algorithms and Architectures},
pages = {85–94},
numpages = {10},
keywords = {matchings, graph algorithms, MapReduce},
location = {San Jose, California, USA},
series = {SPAA '11}
}

@inbook{GhaffariU19,
author = {Mohsen Ghaffari and Jara Uitto},
title = {Sparsifying Distributed Algorithms with Ramifications in Massively Parallel Computation and Centralized Local Computation},
booktitle = {Proceedings of the 2019 Annual ACM-SIAM Symposium on Discrete Algorithms (SODA)},
chapter = {},
year = {2019},
pages = {1636-1653},
doi = {10.1137/1.9781611975482.99},
URL = {https://epubs.siam.org/doi/abs/10.1137/1.9781611975482.99}
}

@article{Onak18,
  author       = {Krzysztof Onak},
  title        = {Round Compression for Parallel Graph Algorithms in Strongly Sublinear
                  Space},
  journal      = {CoRR},
  volume       = {abs/1807.08745},
  year         = {2018},
  url          = {http://arxiv.org/abs/1807.08745},
  eprinttype    = {arXiv},
  eprint       = {1807.08745},
  timestamp    = {Mon, 13 Aug 2018 16:47:19 +0200},
  biburl       = {https://dblp.org/rec/journals/corr/abs-1807-08745.bib},
  bibsource    = {dblp computer science bibliography, https://dblp.org}
}

@inproceedings{AssadiOSS19,
  author       = {Sepehr Assadi and
                  Krzysztof Onak and
                  Baruch Schieber and
                  Shay Solomon},
  editor       = {Timothy M. Chan},
  title        = {Fully Dynamic Maximal Independent Set with Sublinear in n Update Time},
  booktitle    = {Proceedings of the Thirtieth Annual {ACM-SIAM} Symposium on Discrete
                  Algorithms, {SODA} 2019, San Diego, California, USA, January 6-9,
                  2019},
  pages        = {1919--1936},
  publisher    = {{SIAM}},
  year         = {2019},
  url          = {https://doi.org/10.1137/1.9781611975482.116},
  doi          = {10.1137/1.9781611975482.116},
  timestamp    = {Thu, 15 Jul 2021 13:49:05 +0200},
  biburl       = {https://dblp.org/rec/conf/soda/AssadiOSS19.bib},
  bibsource    = {dblp computer science bibliography, https://dblp.org}
}

@inproceedings{ChechikZ19,
  author       = {Shiri Chechik and
                  Tianyi Zhang},
  editor       = {David Zuckerman},
  title        = {Fully Dynamic Maximal Independent Set in Expected Poly-Log Update
                  Time},
  booktitle    = {60th {IEEE} Annual Symposium on Foundations of Computer Science, {FOCS}
                  2019, Baltimore, Maryland, USA, November 9-12, 2019},
  pages        = {370--381},
  publisher    = {{IEEE} Computer Society},
  year         = {2019},
  url          = {https://doi.org/10.1109/FOCS.2019.00031},
  doi          = {10.1109/FOCS.2019.00031},
  timestamp    = {Fri, 25 Aug 2023 19:04:16 +0200},
  biburl       = {https://dblp.org/rec/conf/focs/ChechikZ19.bib},
  bibsource    = {dblp computer science bibliography, https://dblp.org}
}

@article{LeviRY17,
  author       = {Reut Levi and
                  Ronitt Rubinfeld and
                  Anak Yodpinyanee},
  title        = {Local Computation Algorithms for Graphs of Non-constant Degrees},
  journal      = {Algorithmica},
  volume       = {77},
  number       = {4},
  pages        = {971--994},
  year         = {2017},
  url          = {https://doi.org/10.1007/s00453-016-0126-y},
  doi          = {10.1007/S00453-016-0126-Y},
  timestamp    = {Wed, 26 Jul 2017 16:39:42 +0200},
  biburl       = {https://dblp.org/rec/journals/algorithmica/LeviRY17.bib},
  bibsource    = {dblp computer science bibliography, https://dblp.org}
}

@article{ReingoldV16,
  author       = {Omer Reingold and
                  Shai Vardi},
  title        = {New techniques and tighter bounds for local computation algorithms},
  journal      = {J. Comput. Syst. Sci.},
  volume       = {82},
  number       = {7},
  pages        = {1180--1200},
  year         = {2016},
  url          = {https://doi.org/10.1016/j.jcss.2016.05.007},
  doi          = {10.1016/J.JCSS.2016.05.007},
  timestamp    = {Tue, 16 Feb 2021 14:03:54 +0100},
  biburl       = {https://dblp.org/rec/journals/jcss/ReingoldV16.bib}
}

@inbook{GhaffariH21,
author = {Mohsen Ghaffari and Bernhard Haeupler},
title = {A Time-Optimal Randomized Parallel Algorithm for MIS},
booktitle = {Proceedings of the 2021 ACM-SIAM Symposium on Discrete Algorithms (SODA)},
chapter = {},
year={2021},
pages = {2892-2903},
doi = {10.1137/1.9781611976465.172},
URL = {https://epubs.siam.org/doi/abs/10.1137/1.9781611976465.172}
}

@inproceedings{czumaj2003sublinear,
  title={Sublinear-time approximation of Euclidean minimum spanning tree.},
  author={Czumaj, Artur and Erg{\"u}n, Funda and Fortnow, Lance and Magen, Avner and Newman, Ilan and Rubinfeld, Ronitt and Sohler, Christian and others},
  booktitle={SODA},
  pages={813--822},
  year={2003}
}

@article{FischerA19,
author = {Fischer, Manuela and Noever, Andreas},
title = {Tight Analysis of Parallel Randomized Greedy MIS},
year = {2019},
issue_date = {January 2020},
publisher = {Association for Computing Machinery},
address = {New York, NY, USA},
volume = {16},
number = {1},
issn = {1549-6325},
url = {https://doi.org/10.1145/3326165},
doi = {10.1145/3326165},
abstract = {We provide a tight analysis that settles the round complexity of the well-studied parallel randomized greedy MIS algorithm, thus answering the main open question of Blelloch, Fineman, and Shun [SPAA’12].The parallel/distributed randomized greedy Maximal Independent Set (MIS) algorithm works as follows. An order of the vertices is chosen uniformly at random. Then, in each round, all vertices that appear before their neighbors in the order are added to the independent set and removed from the graph along with their neighbors. The main question of interest is the number of rounds it takes until the graph is empty. This algorithm has been studied since 1987, initiated by Coppersmith, Raghavan, and Tompa [FOCS’87], and the previously best known bounds were O(log n) rounds in expectation for Erd\H{o}s-R\'{e}nyi random graphs by Calkin and Frieze [Random Struc. Alg.’90] and O(log2 n) rounds with high probability for general graphs by Blelloch, Fineman, and Shun [SPAA’12].We prove a high probability upper bound of O(log n) on the round complexity of this algorithm in general graphs and that this bound is tight. This also shows that parallel randomized greedy MIS is as fast as the celebrated algorithm of Luby [STOC’85, JALG’86].},
journal = {ACM Trans. Algorithms},
month = dec,
articleno = {6},
numpages = {13},
keywords = {parallel algorithms, maximal independent set, greedy algorithm, PRAM}
}

@inproceedings{BlellochFS12,
author = {Blelloch, Guy E. and Fineman, Jeremy T. and Shun, Julian},
title = {Greedy sequential maximal independent set and matching are parallel on average},
year = {2012},
isbn = {9781450312134},
publisher = {Association for Computing Machinery},
address = {New York, NY, USA},
url = {https://doi.org/10.1145/2312005.2312058},
doi = {10.1145/2312005.2312058},
abstract = {The greedy sequential algorithm for maximal independent set (MIS) loops over the vertices in an arbitrary order adding a vertex to the resulting set if and only if no previous neighboring vertex has been added. In this loop, as in many sequential loops, each iterate will only depend on a subset of the previous iterates (i.e. knowing that any one of a vertex's previous neighbors is in the MIS, or knowing that it has no previous neighbors, is sufficient to decide its fate one way or the other). This leads to a dependence structure among the iterates. If this structure is shallow then running the iterates in parallel while respecting the dependencies can lead to an efficient parallel implementation mimicking the sequential algorithm.In this paper, we show that for any graph, and for a random ordering of the vertices, the dependence length of the sequential greedy MIS algorithm is polylogarithmic (O(log^2 n) with high probability). Our results extend previous results that show polylogarithmic bounds only for random graphs. We show similar results for greedy maximal matching (MM). For both problems we describe simple linear-work parallel algorithms based on the approach. The algorithms allow for a smooth tradeoff between more parallelism and reduced work, but always return the same result as the sequential greedy algorithms. We present experimental results that demonstrate efficiency and the tradeoff between work and parallelism.},
booktitle = {Proceedings of the Twenty-Fourth Annual ACM Symposium on Parallelism in Algorithms and Architectures},
pages = {308–317},
numpages = {10},
keywords = {parallel algorithms, maximal matching, maximal independent set},
location = {Pittsburgh, Pennsylvania, USA},
series = {SPAA '12}
}

@inproceedings{ahmadi2025breaking,
  title={Breaking a Long-Standing Barrier: 2-$\varepsilon$ Approximation for Steiner Forest},
  author={Ahmadi, Ali and Gholami, Iman and Hajiaghayi, MohammadTaghi and Jabbarzade, Peyman and Mahdavi, Mohammad},
    booktitle={2025 IEEE 66th Annual Symposium on Foundations of Computer Science (FOCS)}, 
  year={2025}
}

\end{document}